\newcommand{\sfGASP}{\scalebox{0.9}{\sf GASP}}
\newcommand{\gGASP}{\scalebox{0.9}{\sf gGASP}}
\newcommand{\calC}{\mathcal{C}}
\theoremstyle{plain}
	  \newtheorem{theorem}{Theorem}
\theoremstyle{definition}
	  \newtheorem{define}[theorem]{Definition}
	  \newtheorem{example}[theorem]{Example}
\theoremstyle{remark}
\newcommand*{\citet}[1]{\citeauthor{#1} \shortcite{#1}}
\newcommand*{\citeNP}[1]{\citeauthor{#1} \citeyear{#1}}
\begin{document}
%
\title{Group Activity Selection on Social Networks}
\author{Ayumi Igarashi, Dominik Peters, Edith Elkind \\
Department of Computer Science \\
University of Oxford, UK \\
$\{$ayumi.igarashi, dominik.peters, edith.elkind$\}$@cs.ox.ac.uk}
\copyrightyear{2017}
\pagestyle{plain}
\nocopyright
\maketitle
\global\csname @topnum\endcsname 0
\global\csname @botnum\endcsname 0
\begin{abstract}
We propose a new variant of the group activity selection problem (\sfGASP), where the agents are placed on a social 
network and activities can only be assigned to connected subgroups. We show that if multiple groups
can simultaneously engage in the same activity, finding a stable outcome is easy as long as the network
is acyclic. In contrast, if each activity can be assigned to a single group only, finding stable outcomes 
becomes 
intractable, even if the underlying network is very simple:
the problem of determining whether 
a given instance of a \sfGASP\ admits a Nash stable outcome turns out to be NP-hard 
when the social network is a path, a star, or if the size of each connected component is bounded by a constant.
On the other hand, we obtain fixed-parameter tractability results for this problem with respect
to the number of activities.
\end{abstract}

\section{Introduction}
\noindent
Companies assign their employees to different departments, large decision-making bodies split their members into expert 
committees, and university faculty form research groups: division of labor, and thus group formation, is 
everywhere. For a given assignment of agents to activities (such as management, product development, or marketing) to be 
successful, two considerations are particularly important: the agents need to be capable to work on their activity, and 
they should be willing to cooperate with other members of their group.

Many relevant aspects of this setting are captured by the {\em group activity selection problem} (\sfGASP), 
introduced by Darmann et al.~\shortcite{Darmann2012}. In \sfGASP\ players have preferences over pairs of the form 
(activity, group size).
The intuition behind this formulation is that certain tasks are best performed in small or large groups, 
and agents may differ in their preferences over group sizes; however, they are indifferent about
other group members' identities. In the analysis of \sfGASP, desirable outcomes 
correspond to \emph{stable} and/or \emph{optimal} assignments of players to activities, i.e., assignments 
that are resistant to player deviations and/or
maximize the total welfare. In the work of Darmann et al.~\shortcite{Darmann2012}, players are assumed 
to have approval 
preferences, and a particular focus is placed on individually rational assignments 
with the maximum number of participants; subsequently, \citeauthor{Darmann2015}~\shortcite{Darmann2015}
investigated a model where players submit ranked ballots.

However, the basic model of \sfGASP\ ignores the relationships among the agents: 
Do they know each other? Are their working styles and personalities compatible?
Typically, we cannot afford to ask each agent about her preferences over all pairs of the form 
(coalition, activity), as the number of possible coalitions grows quickly with the number of agents.
A more practical alternative is to adopt the ideas of \citet{Myerson1977} and assume
that the relationships among the agents are encoded by a \emph{social network},
i.e., an undirected graph where nodes correspond to players and edges represent 
communication links between them; one can then require that each group is connected
with respect to this graph.

\begin{table*}[t]
	\centering
	\begin{tabular}{llccc}
		\toprule
		&& Complexity (general case) & few activities (FPT wrt $p$) & copyable activities \\
		\midrule
		Nash stability & trees & NP-c. (Thm~\ref{thm:hardness:path:NS}) && poly time (Thm~\ref{thm:NS:copyable}) \\
		& paths & NP-c. (Thm~\ref{thm:hardness:path:NS}) & $O^*(4^pn^{2})$ (Thm~\ref{thm:FPT:path:NS}) & poly time (Thm~\ref{thm:NS:copyable})\\
		& stars & NP-c. (Thm~\ref{thm:hardness:star:NS}) & $O^*(2^{p}p^{p+1}n\log n)$ (Thm~\ref{thm:FPT:star:NS}) & poly time (Thm~\ref{thm:NS:copyable})\\
		& small components & NP-c. (Thm~\ref{thm:hardness:smallcomponent:NS}) & $O^*(p^{c}8^{p}kn^2)$ (Thm~\ref{thm:FPT:smallcomponents:NS}) \\
		\midrule
		core stability & trees & NP-c. (Thm~\ref{thm:hardness:path:core}) & & poly time (Thm~\ref{thm:core:copyable}) \\
		& paths & NP-c. (Thm~\ref{thm:hardness:path:core}) & & poly time (Thm~\ref{thm:core:copyable}) \\
		& stars & NP-c. (Thm~\ref{thm:hardness:path:core}) & & poly time (Thm~\ref{thm:core:copyable}) \\
		& small components & NP-c. (Thm~\ref{thm:hardness:path:core}) &  $O^*(p^{c+1}8^{p}kn^2)$ (Thm~\ref{thm:FPT:smallcomponents:core}) \\
		\bottomrule
	\end{tabular}
	\caption{Overview of our complexity results. Here, $n$ is the number of players, $p$ is the number of activities, and $c$ is a bound on the size of the connected components. The NP-completeness results for small components hold even for $c = 4$ for Nash stability and for $c = 3$ for core stability.}
	\vspace{-5pt}
	\label{table}
\end{table*}

In this paper we extend the basic model of \sfGASP\ to take into account the agents' social network. 
We formulate several notions of stability for this setting, including Nash stability and core stability,
and study the complexity of computing stable outcomes in our model. 
These notions of stability are inspired by the hedonic 
games literature \cite{Aziz2016} and were applied in the 
\sfGASP\ setting by \citet{Darmann2012} and \citet{Darmann2015}.
 
Now, hedonic games on social networks were recently considered by \citet{Igarashi2016},
who showed that if the underlying network is acyclic, stable outcomes are guaranteed to exist
and some of the problems known to be computationally hard for the unrestricted setting
become polynomial-time solvable. We obtain a similar result for \sfGASP, but only
if several groups of agents can simultaneously engage in the same activity, i.e., 
if the activities are {\em copyable}. In contrast, we show that if each activity 
can be assigned to at most one coalition, finding a stable outcome is hard
even if the underlying network is very simple.
Specifically, checking the 
existence of Nash stable or core stable outcomes turns out to be NP-hard 
even for very restricted classes of graphs, including paths, stars, 
and graphs with constant-size connected components. We believe that this result is 
remarkable since, in the context of cooperative games, such restricted networks 
usually enable one to design efficient algorithms for computing stable solutions (see, e.g., 
\citeNP{Chalkiadakis2016}; \citeNP{Elkind2014}; 
\citeNP{Igarashi2016}).

Given these hardness results, we switch to the fixed parameter tractability paradigm.
In the context of \sfGASP, a particularly relevant parameter is the number of activities:
generally speaking, we expect the number of players to be considerably larger than the number
of available activities. We show that for the restricted classes of networks 
used in our hardness proofs (i.e., paths, stars, and graphs with small connected components)
finding a Nash stable outcome is in FPT with respect to the number of activities;
some of our results extend to the core stable outcomes and to somewhat more general networks
(though not to arbitrary networks).
Our results are summarized in Table~\ref{table}.

\section{Preliminaries}
For $s\in{\mathbb N}$, let $[s]=\{1,2,\ldots,s\}$.
An instance of the {\em Group Activity Selection Problem} (\sfGASP) is given by a finite set of {\em players} 
$N=[n]$, a finite set of {\em activities} $A=A^{*}\cup\{a_{\emptyset}\}$ where 
$A^{*}=\{a_{1},a_{2},\ldots,a_{p}\}$ and $a_{\emptyset}$ is the {\em void activity}, and a {\em profile} 
$(\succeq_{i})_{i \in N}$ of complete and transitive preference relations over the set of {\em alternatives} 
$X=A^{*} \times [n]\cup \{(a_{\emptyset},1)\}$. Intuitively, $a_\emptyset$ corresponds to staying 
alone and doing nothing; multiple agents can make that choice independently from each other.

We refer to subsets $S\subseteq N$ of players as {\em coalitions}. 
We say that two non-void activities $a$ and $b$ are {\em equivalent} if for every player $i \in N$ and every $\ell \in [n]$ 
it holds that $(a,\ell) \sim_{i} (b,\ell)$. A non-void activity $a \in A^{*}$ is called {\em copyable} if $A^{*}$ contains 
at least $n$ activities that are equivalent to $a$ (including $a$ itself). 
We say that player $i \in N$ {\em approves} an alternative $(a,k)$ if $(a,k) \succ_{i} (a_{\emptyset},1)$. 

An outcome of a \sfGASP\ is an {\em assignment} of activities $A$ to players $N$, i.e., a mapping $\pi:N \rightarrow A$.
Given an assignment $\pi:N \rightarrow A$ and a non-void activity $a \in A^{*}$,
we denote by $\pi^{a}=\{\, i \in N \mid \pi(i)=a \,\}$ the set of players 
assigned to $a$. Also, if $\pi(i)\neq a_\emptyset$,
we denote  by $\pi_{i}=\{i\}\cup\{\, j \in N \mid \pi(j)=\pi(i)\}$ 
the set of players assigned to the same activity as player $i \in N$;
we set $\pi_i=\{i\}$ if $\pi(i)= a_\emptyset$.
An assignment $\pi:N \rightarrow A$ of a \sfGASP\ is {\em individually rational} (IR) if 
for every player $i \in N$ with $\pi(i)\neq a_\emptyset$
we have $(\pi(i),|\pi_{i}|) \succeq_{i} (a_{\emptyset},1)$.
A coalition $S \subseteq N$ and an activity $a \in A^{*}$ {\it strongly block} an assignment $\pi:N \rightarrow A$ if 
$\pi^a\subseteq S$ and $(a,|S|) \succ_{i} (\pi(i),|\pi_{i}|)$ for all $i \in S$.
An assignment $\pi:N \rightarrow A$ of a \sfGASP\ is called {\em core stable} (CR) if it is individually rational, and 
there is no coalition $S \subseteq N$ and activity $a \in A^{*}$ 
such that $S$ and $a$  strongly block $\pi$.
Given an assignment $\pi:N \rightarrow A$ of a \sfGASP, a player $i \in N$ is said to have 
an {\em NS-deviation} 
to activity $a \in A^{*}$
if $(a,|\pi^{a}|+1) \succ_{i} (\pi(i),|\pi_{i}|)$, that is, if $i$ would prefer to join the group $\pi^{a}$.
An assignment $\pi:N \rightarrow A$ of a \sfGASP\ is called {\em Nash stable} (NS) 
if it is individually rational and no player $i\in N$ has an NS-deviation 
to some $a \in A^{*}$.

\section{Our Model}
We now define a group activity selection problem where communication structure among players is restricted by an undirected graph.
\begin{define}
An instance of the {\em Group Activity Selection Problem with graph structure} (\gGASP) is given by an instance 
$(N,(\succeq_{i})_{i \in N},A)$ of a \sfGASP\ and a set of communication links between players $L \subseteq \{\, \{i,j\} 
\mid i,j\in N \land i\neq j \,\}$.
\end{define}

A coalition $S \subseteq N$ is said to be {\em feasible} if $S$ is connected in the graph $(N,L)$. An outcome of a 
\gGASP\ is a {\em feasible assignment} $\pi:N \rightarrow A$ such that $\pi_{i}$ is a feasible 
coalition for every $i \in N$. We adapt the definitions of stability concepts to our setting as follows. We say that a 
deviation by a group of players is feasible if the deviating coalition itself is feasible; 
a deviation by an individual player where player $i$ 
joins activity $a$ is feasible if $\pi^{a}\cup \{i\}$ is feasible. We modify the definitions in the previous section by 
only requiring stability against feasible deviations.
Note that an ordinary \sfGASP\ (without graph structure) is equivalent to a \gGASP\ 
where the underlying graph $(N,L)$ is complete.

In this paper, we will be especially interested in \gGASP s where $(N,L)$ is \emph{acyclic}. This restriction 
guarantees the existence of stable outcomes in many other cooperative game settings. However, this is not the case 
for \gGASP s: here, both core and Nash stable outcomes may fail to exist, even if $(N,L)$ is a path or a star.

\begin{example}\label{ex:core:empty}
Consider a \gGASP\ with $N=\{1,2,3\}$, $A^{*}=\{a,b\}$, $L=\{\{1,2\},\{2,3\}\}$, where preferences $(\succeq_{i})_{i \in N}$ are given as follows:
\begin{align*}
1:&~ (b,2) \succ_{1} (a,3) \succ_{1} (a_{\emptyset},1)\\
2:&~ (a,2) \succ_{2} (b,2) \succ_{2} (a,3) \succ_{2} (a_{\emptyset},1)\\
3:&~ (a,3) \succ_{3} (b,1) \succ_{3} (a,2) \succ_{3} (a_{\emptyset},1)
\end{align*}
There are only four individually rational feasible assignments; in each case, $\pi$ admits a strongly blocking feasible 
coalition and activity. First, when $\pi(1)=b$, $\pi(2)=b$, $\pi(3)=a_{\emptyset}$, the coalition $\{2,3\}$ together with 
activity $a$ strongly blocks~$\pi$. Second, when $\pi(1)=a_{\emptyset}$, $\pi(2)=a$, $\pi(3)=a$, the coalition $\{3\}$ 
together with activity $b$ strongly blocks $\pi$. Third, when $\pi(1)=a_{\emptyset}$, $\pi(2)=a_{\emptyset}$, $\pi(3)=b$, 
the coalition $\{1,2,3\}$ together with activity $a$ strongly blocks $\pi$. Finally, when $\pi(1)=a$, $\pi(2)=a$, and $\pi(3)=a$, the coalition $\{1,2\}$ together with activity $b$ strongly blocks $\pi$. \qed
\end{example}

Similarly, a Nash stable outcome is not guaranteed to exist even for \gGASP s on paths.

\begin{example}[Stalker game]\label{ex:NS:empty}
Consider a two-player \gGASP\ 
where player 1 is happy to participate in any activity as long as she is alone,
and player 2 always wants to participate in an activity with player 1.
This instance admits no Nash stable outcomes: if player 1 engages in an activity,
then player 2 wants to join her coalition, causing player 1 to deviate
to another (possibly void) activity. \qed
\end{example}

However, if all activities are copyable, we can effectively 
treat \gGASP\ as a hedonic game on a graph. In particular, 
we can invoke a famous result of \citet{Demange2004} concerning the stability of non-transferable 
utility games on trees. Thus, requiring all activities to be copyable 
allows us to circumvent the non-existence result for the core 
(Example~\ref{ex:core:empty}). The argument is constructive.

\begin{theorem}[implicit in the work of~\citeNP{Demange2004}]\label{thm:core:copyable}
For every \gGASP\ where each activity $a \in A^{*}$ is copyable 
and $(N,L)$ is acyclic, a core stable feasible assignment exists and can be found in time polynomial in $p$ and~$n$.
\end{theorem}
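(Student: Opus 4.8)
The plan is to reduce the copyable \gGASP\ on an acyclic network to a hedonic game (specifically, a non-transferable utility coalition formation game) on the same tree, and then invoke the theorem of Demange on the existence of core stable outcomes in such games. The key observation enabling this reduction is that when every activity is copyable, the identity of a player's coalition---namely which activity it is pursuing and who its members are---determines each player's utility in a way that no longer couples distinct groups through the scarcity of activities. First I would make this precise: I would define, for each connected coalition $S$ and each player $i \in S$, a value derived from $i$'s preference over the best activity $a \in A^{*}$ that $S$ could be assigned, i.e. using $(a, |S|)$ ranked against $(a_\emptyset, 1)$. Because there are at least $n$ equivalent copies of every activity, any collection of pairwise disjoint connected coalitions can be \emph{simultaneously} assigned their respective preferred activities without conflict, so a blocking coalition never fails to realize its deviation for lack of an unused activity.

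The main steps are as follows. First, I would translate the \gGASP\ into a game where the feasible coalitions are exactly the connected subsets of $(N,L)$ and each player $i$ has a preference ordering over these feasible coalitions containing $i$, defined by $S \succeq_i T$ whenever the best alternative $(a,|S|)$ available to $i$ in $S$ is weakly preferred to the best available in $T$ (with singletons mapped to $(a_\emptyset,1)$, reflecting the option of staying alone). Second, I would verify that Demange's hypotheses are met: the game is a hedonic (or NTU) game whose feasible structure is a tree, which is precisely the setting in which Demange guarantees a core stable partition exists and can be computed by her bottom-up algorithm that processes the tree from the leaves toward a chosen root. Third, I would convert the core stable partition output by that algorithm back into a feasible assignment $\pi$ by giving each coalition in the partition a distinct copy of its members' commonly preferred activity, which is possible precisely because activities are copyable and there are at least $n$ of each equivalence class.

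The crux of the correctness argument is the equivalence of blocking notions: I must show that a coalition $S$ with activity $a$ strongly blocks the assignment $\pi$ in the \gGASP\ sense if and only if the corresponding connected coalition blocks the partition in Demange's hedonic sense, and likewise that individual rationality transfers. The forward direction is immediate from the definition of the induced preferences; the reverse direction is where copyability does the real work, since I must argue that any hedonic blocking coalition can actually be \emph{instantiated} as a feasible \gGASP\ deviation---and here the availability of a fresh copy of the activity guarantees that the deviating group can seize its preferred $(a,|S|)$ regardless of what the remaining players are doing. I expect this copyability-to-instantiation step to be the main obstacle, because one must check carefully that the constraint $\pi^a \subseteq S$ in the strong-blocking definition is compatible with the hedonic picture: since each coalition in the partition uses its own copy, the players currently pursuing the particular copy that $S$ would claim can be taken to be exactly those in $S$, so the containment holds vacuously for a fresh copy.

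Finally, I would address the running time: Demange's algorithm runs in time polynomial in the size of the tree and the number of distinct feasible-coalition preferences it must compare, and since each player's induced preference over its feasible coalitions reduces to comparing at most $p$ activities at each relevant size, the overall procedure---including constructing the induced preferences and reassigning copies at the end---can be carried out in time polynomial in $p$ and $n$, as claimed. The only subtlety I would flag is ensuring that the number of feasible coalitions examined at each node remains polynomially bounded, which follows from the acyclic structure since the dynamic program only ever merges a node with connected pieces of its already-processed subtrees rather than enumerating arbitrary subsets.
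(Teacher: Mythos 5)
Your high-level plan---reduce to Demange's result for games on trees and use copyability to realize deviations---is the same route the paper takes, and your counting argument for instantiating a deviation is exactly right (any copy of $a$ used only by players inside $S$ already satisfies $\pi^a\subseteq S$, and since there are at least $n$ copies and at most $n-|S|$ coalitions containing a player outside $S$, a usable copy always exists). However, your reduction to a \emph{hedonic} game has a genuine flaw: you let player $i$ rank a connected coalition $S$ by the best alternative $(a,|S|)$ \emph{for $i$}, and you later assign each coalition of the resulting core stable partition ``its members' commonly preferred activity.'' Such a common activity need not exist, and when members disagree the correspondence between the two cores breaks in both directions. Concretely, take $N=\{1,2\}$, $L=\{\{1,2\}\}$, and copyable activities $a,b$ with $(a,2)\succ_1 (a_{\emptyset},1)\succ_1 (b,2)$ and $(b,2)\succ_2 (a_{\emptyset},1)\succ_2 (a,2)$, all non-void singleton alternatives unacceptable to both. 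In your induced hedonic game each player ranks $\{1,2\}$ above $\{i\}$, so the partition $\{\{1,2\}\}$ is hedonically core stable; yet no assignment of a single activity to $\{1,2\}$ is even individually rational, so your conversion step fails. Meanwhile the unique core stable feasible assignment of the \gGASP\ (both players void) corresponds to the singleton partition, which is hedonically blocked by $\{1,2\}$---so here the hedonic core and the \gGASP\ core are disjoint. For the same reason, the ``immediate'' forward direction of your blocking equivalence is also false: a player's hedonic value of her current coalition (her personal best activity at that size) can strictly exceed the alternative she actually receives under any single common activity.

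The repair is to not collapse the activity choice into per-player coalition preferences. The state of a coalition must be a pair: a connected set $S$ together with one common activity $a\in A^*$ (or $a_{\emptyset}$ for singletons), giving every member the alternative $(a,|S|)$; blocking by $S$ means some \emph{single} activity makes all of its members strictly better off simultaneously. This is a non-transferable-utility game on the tree rather than a hedonic game, and Demange's theorem covers precisely this setting---which is why the paper invokes her result on NTU games on trees. Her bottom-up algorithm remains polynomial here, since each coalition processed by the dynamic program has only $O(p)$ relevant states (the coalition's size is fixed, so only the activity varies). With that one change, the rest of your argument---copyability to realize the stable partition's activity choices with distinct copies, and to instantiate any blocking pair $(S,a)$---goes through as you wrote it.
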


\noindent
Now, the stalker game in Example~\ref{ex:NS:empty} does not admit a Nash stable outcome even if we make all
activities copyable. However, for copyable activities we can still find a Nash stable outcome
in polynomial time if the social network is acyclic.

\begin{theorem}\label{thm:NS:copyable}
Given an instance $(N,A,(\succeq_{i})_{i \in N},L)$ of \gGASP\ where each activity $a \in A^{*}$ is copyable 
and the graph $(N,L)$ is acyclic, one can decide whether it admits a Nash stable outcome in time polynomial in $p$ and 
$n$.
\end{theorem}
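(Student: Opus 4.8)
The plan is to exploit copyability to turn Nash stability into a collection of \emph{local} conditions on a tree, and then verify these by a bottom-up dynamic program. First I would record the following consequence of copyability: since $A^{*}$ contains at least $n$ equivalent copies of every type and there are only $n$ players, a fresh, currently-unused copy of any activity type is always available. Hence every player $i$ always has the \emph{feasible} option of forming the singleton coalition $\{i\}$ on a fresh copy of any type $a$, obtaining $(a,1)$. Consequently, an assignment $\pi$ is Nash stable if and only if: (i) it is individually rational; (ii) for every $i$ and every type $a$ we have $(\pi(i),|\pi_i|)\succeq_i (a,1)$ (no player wants to walk off alone); and (iii) no player wants to join the coalition of one of its neighbours. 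For (iii) note that, because every coalition is connected, the only coalitions a player $i$ can feasibly join are those containing a neighbour of $i$; joining such a coalition $C$ on type $a$ yields $(a,|C|+1)$, and the move is always feasible since $C\cup\{i\}$ is connected. I would precompute, for each player $i$, the best singleton alternative $\max_a (a,1)$ so that condition (ii) is an $O(1)$ test once $|\pi_i|$ is known.

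Next, I would use the fact that $(N,L)$ is a forest and treat each tree separately (rooting it arbitrarily). In any feasible assignment the non-void coalitions are connected subtrees and the void players are isolated, so the coalition structure is a partition of the tree into connected pieces, with a type attached to each non-void piece; copyability lets distinct pieces reuse the same type freely. This is exactly the setting for a standard leaf-to-root dynamic program over connected partitions of a tree, where the coalition containing a node $v$ consists of $v$, some of its children's coalitions, and possibly extends upward to $v$'s parent.

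The DP state I would attach to a node $v$ records $v$'s type (or $a_\emptyset$), the \emph{final} size $k$ of the coalition containing $v$, a flag indicating whether that coalition is already \emph{closed} within the subtree rooted at $v$ or is still \emph{open} (i.e.\ will merge with $v$'s parent), and, for an open coalition, the size accumulated so far, to be checked for equality with $k$ when the coalition is eventually closed. Given a choice of $v$'s type and of which children merge into $v$'s coalition (those children must be open with the same type and the same promised $k$), every condition from the reduction can be tested \emph{locally}: (i) and (ii) reduce to a comparison of $(\text{type},k)$ against $v$'s IR threshold and precomputed best singleton; and for (iii), each tree edge $\{u,v\}$ whose endpoints lie in different coalitions contributes two comparisons, one checking that $u$ does not prefer joining $v$'s coalition and one checking the reverse, both evaluable from the two endpoints' states since each records its coalition's type and final size. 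Merging children is a knapsack-style accumulation of sizes, so combining the table of a node from those of its children is polynomial, and closing the root enforces the final size-consistency check.

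The crux, and the reason the final size $k$ is carried explicitly in the state, is the bidirectional dependence in condition (iii): whether a node wants to join the coalition on the other side of an edge depends on that coalition's \emph{final} size, which for a coalition straddling the current cut is not yet determined within the subtree. Guessing $k$ up front and verifying it at closing time decouples these checks and makes them local to each edge; this is the only genuinely delicate point, as the remaining bookkeeping is routine tree DP. Since each node has polynomially many relevant states (type, final size, accumulated size, and the open/closed flag) and the per-node combination over children is polynomial in $n$, the whole procedure runs in time polynomial in $p$ and $n$, and it reports a Nash stable outcome exactly when some root state survives all checks.
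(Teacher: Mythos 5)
Your proposal is correct and takes essentially the same approach as the paper's proof: both use copyability to reduce Nash stability to local conditions (individual rationality, no preferred fresh singleton, and no desire to cross a tree edge into a neighbouring coalition), and both verify these by a bottom-up tree DP whose state records a node's activity, the \emph{guessed} final size of its coalition, and the size accumulated so far, with knapsack-style merging of children and two-sided border checks at separating edges. Your explicit open/closed flag is just a cosmetic variant of the paper's convention that a separated child's coalition must satisfy $f_{j}((b,\ell),\ell)$, i.e.\ accumulated size equals promised size.
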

\begin{proof}
If the input graph $(N, L)$ is a forest, we can process each of its connected components separately, so we assume that 
$(N, L)$ is a tree. We choose an arbitrary node as the root and construct a rooted tree by orienting the edges in $L$ towards the leaves. Then, 
for each player $i$, each activity $a$ and each $k\in[n]$ and $t\in[k]$
we set $f_{i}((a,k),t)$ to \emph{true} if the following condition holds: 
there exists a feasible assignment $\pi$ for the subtree rooted at $i$ where $|\pi_i|=t$, $\pi(i)=a$, 
each player in $\pi_i$ likes $(a, k)$ at least as much as any alternative she can deviate to
(including the void activity), and no player who is not in $\pi_i$ has an NS-deviation.
Otherwise, we set $f_{i}((a,k),t)$ to \emph{false}.

For each player $i \in N$, each alternative $(a,k) \in X$, 
and each $t \in [k]$, we initialize $f_i((a,k),t)$ to {\em true} if $t=1$ 
and $i$ weakly prefers $(a,k)$ to any alternative of size $1$, 
and we set $f_i((a,k),t)$ to {\em false} otherwise. Then, for $i \in N$ from the bottom to the root, we iterate through all the 
children of $i$ and update $f_{i}((a,k),t)$ one by one; more precisely, for each child $j$ of $i$ and for 
$t=k,\ldots,1$, we set $f_{i}((a,k),t)$ to {\em true} if there exists an $x \in [t]$ such that both $f_{i}((a,k),x)$ and 
$f_{j}((a,k),t-x)$ are {\em true}, or $f_{i}((a,k),t)$ is {\em true} and there exists $(b,\ell) \in X$ such that 
$f_{j}((b,\ell),\ell)$ is {\em true} and neither $i$ nor $j$ want to move across the ``border'', 
i.e., $(a,k) \succeq_{i} (b,\ell+1)$ and $(b,\ell) \succeq_{j} (a,k+1)$.
It is easy to see that a Nash stable assignment exists if and only if $f_{r}((a,k),k)$ is {\em true} 
for some alternative $(a,k) \in X$, 
where $r$ is the root of the rooted tree. If this is the case, a Nash stable feasible assignment can be found using dynamic programming.
\end{proof}

\section{Hardness Results for Nash Stability}
We now move on to the case where each activity can be used at most once. We will show that computing Nash stable 
outcomes of \gGASP s is NP-complete even when the underlying network is a path, a star, or a graph with constant size 
connected components. This problem is in NP for any social network: given an assignment, 
we can easily check whether it is Nash stable. 

Our proof for paths is by reduction from a restricted version of the NP-complete problem {\sc Rainbow Matching}. Given a 
graph $G$, a {\em proper edge coloring} is a mapping $\phi:E(G) \rightarrow \calC$ where $\phi(e) \neq \phi(e^{\prime})$ 
for all edges $e, e^{\prime}$ such that $e\neq e^{\prime}$, $e\cap e^{\prime}\neq\emptyset$.
Without loss of generality, we assume that $\phi$ is surjective.
 A {\em properly edge-colored graph} $(G,\phi)$ is a graph together with a proper edge coloring. A matching $M$ 
in an edge colored graph $(G,\phi)$ is called a {\em rainbow matching} if all edges of $M$ have different colors. An 
instance of {\sc Rainbow Matching} is a graph $G$ with a proper edge coloring $\phi$ and an integer $k$. It is a 
``yes''-instance if $G$ admits a rainbow matching with at least $k$ edges and a ``no''-instance otherwise.  
\citet{Le2014} show that {\sc Rainbow Matching} remains NP-complete even for properly edge-colored paths.

\begin{theorem}\label{thm:hardness:path:NS}
Given an instance of \gGASP\ whose underlying graph is a path, it is {\rm NP}-complete to determine whether it has a Nash 
stable feasible assignment.
\end{theorem}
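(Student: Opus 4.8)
The plan is to first dispatch membership in NP, which is immediate: as the excerpt already notes, given a feasible assignment one can check individual rationality and the absence of any feasible NS-deviation in polynomial time. The substance is the hardness, which I would establish by a reduction from {\sc Rainbow Matching} on properly edge-colored paths, shown NP-complete by \citet{Le2014}.

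For the core of the reduction, I would start from a properly edge-colored path $G = v_1 v_2 \cdots v_m$ with coloring $\phi$ and target $k$. Introduce one player $p_{v_i}$ for each vertex, laid out on the \gGASP\ path in the same order so that $p_{v_i}$ and $p_{v_{i+1}}$ are adjacent, and one (non-copyable) activity $a_c$ for each color $c \in \calC$. The preferences of $p_{v_i}$ are designed so that its only approved alternatives are $(a_{\phi(e)},2)$ for the edges $e$ incident to $v_i$ (plus the auxiliary activities introduced below); every alternative of size other than $2$, and every color not labeling an incident edge, is ranked below $(a_\emptyset,1)$. The intended reading is that a matched edge $e=\{v_i,v_{i+1}\}$ corresponds to the size-$2$ coalition $\{p_{v_i},p_{v_{i+1}}\}$ engaging in $a_{\phi(e)}$. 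This already pins down the combinatorics of individually rational assignments: approving only size-$2$ coalitions rules out coalitions of size $\ge 3$; since $\phi$ is proper, two path-neighbours can jointly approve only the single color of the edge joining them, so every non-void color coalition is exactly a matched edge; and because each activity is non-copyable, at most one edge of each color is realized. Hence individually rational feasible assignments should correspond precisely to rainbow matchings of $G$, with unmatched vertices sitting at $a_\emptyset$ or at an auxiliary activity.

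The crux is enforcing the threshold $k$, and this is the step I expect to be the main obstacle. The bare construction does not capture it: since a single player can never create a new pair by a unilateral NS-deviation (forming a pair requires both endpoints to move), the empty matching, with every $p_{v_i}$ at $a_\emptyset$, is already Nash stable. I therefore need a threshold gadget, appended inline so the whole instance remains a single path, whose role is to make every assignment realizing fewer than $k$ matched pairs admit a feasible NS-deviation, while leaving at least one assignment with $\ge k$ pairs Nash stable. The approach I would take is to meter the ``slack'': provide exactly $m-2k$ limited, single-use auxiliary ``slot'' activities that an otherwise-unmatched vertex may occupy, together with a stalker-type sub-gadget (in the spirit of Example~\ref{ex:NS:empty}) that renders any unmatched vertex failing to secure a slot perpetually unstable. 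If this can be arranged, then in any Nash stable assignment at most $m-2k$ vertices are unmatched, forcing at least $2k$ matched vertices, i.e.\ a rainbow matching of size $\ge k$; conversely, a rainbow matching of size $\ge k$ can be completed to a Nash stable assignment by filling the slots.

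The genuine difficulty lies in realizing this metering under two competing constraints: connectivity forces the gadget to be a single path, so there is no central ``collector'' coalition able to tally the scattered matched pairs, and Nash stability grants access only to single-player, size-sensitive deviations. The delicate part is to design the stalker sub-gadget so that it (i) destabilizes exactly the deficient configurations, including the degenerate case in which all slots are filled yet an unmatched vertex remains, and (ii) introduces no spurious Nash stable outcome that pacifies the gadget while bypassing the matching. Establishing both directions of correctness then reduces to a case analysis showing that the only surviving stable assignments are precisely those encoding rainbow matchings of size at least $k$.
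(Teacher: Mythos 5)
Your NP-membership argument and your basic encoding (adjacent vertex players forming size-$2$ coalitions that play the color of the edge joining them, with properness of $\phi$ and non-copyability doing the rest) are sound, and you correctly diagnose that without further machinery the empty matching is Nash stable, so a threshold mechanism is indispensable. But that mechanism is precisely what is missing from your proposal, and the one you sketch cannot be made to work. The obstruction is structural: in \gGASP\ no player can ever be dislodged from the void activity by other players, since NS-deviations are deviations \emph{to} non-void activities and no one can ``join'' a player assigned $a_\emptyset$. Hence a vertex player at the void activity is automatically stable the moment every alternative it approves is unavailable (every approved activity already in use by a non-adjacent coalition, or at an unattractive size). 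Now run your slot construction on a matching of size $k' < k$: there are $m-2k' > m-2k$ unmatched vertices; let $m-2k$ of them fill all the slots and leave the rest at void. All slots are in use, so the excess unmatched vertices have no feasible deviation, the slot-fillers and matched pairs have none either, and the assignment is Nash stable --- a spurious stable outcome encoding a too-small matching. No stalker sub-gadget can repair this, because on a path the only information a remote gadget can react to is \emph{which activities are in use} (joining a distant coalition is never feasible); a gadget cannot sense whether a particular vertex player is matched. So any per-vertex metering of ``slack'' collapses.

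The paper's reduction avoids this by metering \emph{colors} instead of vertices, i.e., by putting the pressure on activities, which is exactly the quantity that remote gadgets can sense. It attaches one stalker pair (as in Example~\ref{ex:NS:empty}) for each color $c \in \calC$, whose effect is that no Nash stable assignment exists unless activity $c$ is used \emph{outside} that gadget; it also appends $q-k$ ``garbage collector'' players, each willing to play any single color alone (here $q = |\calC|$). Consequently, in any Nash stable assignment all $q$ color activities are in use, at most $q-k$ of them by collectors, so at least $k$ colors must be realized by the matching encoding (in the paper, size-$3$ triples vertex--edge--vertex, using both vertex and edge players); conversely, a rainbow matching of size $k$ extends to a Nash stable assignment by letting the collectors absorb the $q-k$ leftover colors. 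This ``force every activity to be used, and bound the capacity of the sink'' counting argument is the key idea your proposal lacks, and without it (or an equivalent activity-level mechanism) the reduction does not go through.
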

\begin{proof}
The hardness proof proceeds by a reduction from {\sc Path Rainbow Matching}.

Given an instance $(G, \phi, k)$ of {\sc Path Rainbow Matching} where the set of colors is given by 
$\calC$ where $|\calC|=q$, we construct an instance of \gGASP\ on a path as follows. We create a vertex 
player $v$ for each $v \in V(G)$ and an edge player $e$ for each $e \in E(G)$, and align them in the order consistent 
with $E(G)$; specifically, we let $N_{G}=V(G) \cup E(G)$ and $L_{G}=\{\, \{v,e\} \mid v \in e \in E(G) \,\}$. To the 
right of the graph $(N_G,L_G)$, we attach a path that consists of ``garbage collectors'' $\{g_{1},g_{2},\ldots,g_{q-k}\}$ 
and $q$ copies $(N_{c},L_{c})$ of the stalker game
where $N_{c}=\{c_1,c_2\}$ and $L_{c}=\{ \{c_1,c_2\}\}$ for each $c \in \calC$.
We introduce a color activity $c$ for each color $c \in \calC$.
Each vertex player $v$ approves color activities $\phi(e)$ of its adjacent edges $e$ with size $3$; each edge player $e$ 
approves the color activity $\phi(e)$ of its color with size $3$; each garbage collector $g_{i}$ approves any color 
activity $c$ with size $1$; finally, for players in $N_{c}$, $c \in \calC$,  player $c_1$ approves its color activity 
$c$ with size $1$, whereas player $c_2$ approves $c$ with size $2$.

We will now argue 
that $G$ contains a rainbow matching of size at least $k$ if and only if there exists a Nash stable 
feasible assignment.

Suppose that there exists a rainbow matching $M$ of size $k$. We construct a feasible assignment 
$\pi$ where for each $e=\{u, v\}\in M$ we set $\pi(e)=\pi(u)=\pi(v)=\phi(e)$,
each garbage collector $g_{i}$, $i\in[q-k]$, is arbitrarily assigned to one of the remaining $q-k$ color activities, 
and the remaining players are assigned to the void activity. The assignment $\pi$ is Nash stable, since every garbage 
collector as well as every edge or vertex player assigned to a color activity
are allocated their top alternative, and no remaining player has an NS feasible deviation.

Conversely, suppose that there is a Nash stable feasible assignment $\pi$. Let $M=\{\, e \in E(G) \mid \pi(e) \in \calC 
\,\}$. We will show that $M$ is a rainbow matching of size at least $k$. To see this, notice that $\pi$ cannot 
allocate a color activity to a member of $N_{c}$, since otherwise no feasible assignment would be Nash stable. Further, at 
most $q-k$ color activities are allocated to the garbage collectors, which means that at least $k$ color activities 
should be assigned to vertex and edge players. The only individually rational way to do this is to select triples of the 
form $(u,e,v)$ where $e=\{u,v\} \in E(G)$ and assign to them their color activity $\phi(e)$. Thus, $M$ is a rainbow 
matching of size at least $k$.
\end{proof}

For \gGASP s on stars we provide a reduction from the NP-complete problem {\sc Minimum Maximal 
Matching} (MMM). An instance of MMM is a graph $G$ and a positive integer $k \leq |E(G)|$. It is a ``yes''-instance if 
$G$ admits a maximal matching with at most $k$ edges, and a ``no''-instance otherwise. The problem remains NP-complete for 
bipartite graphs \cite{Demange2008}.

\begin{theorem}
	\label{thm:hardness:star:NS}
Given an instance of \gGASP\ whose underlying graph is a star, it is {\rm NP}-complete to determine whether it has a Nash 
stable feasible assignment.
\end{theorem}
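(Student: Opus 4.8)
The problem is in NP, as noted above: given a feasible assignment, Nash stability can be verified in polynomial time. For the hardness I would reduce from {\sc Minimum Maximal Matching} (MMM) on bipartite graphs, which is NP-complete~\cite{Demange2008}. It is convenient to recall that a matching $M$ is maximal if and only if every edge of $G$ has an endpoint covered by $M$; thus detecting a \emph{maximal} matching of size at most $k$ is exactly what the reduction must accomplish.

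Given an MMM instance $(G,k)$ with $G=(U\cup W,E)$, the plan is to construct a \gGASP\ whose communication graph is a star with a single center $d$. The leaves fall into three roles. First, \emph{selector} players record which edges are chosen: I would introduce selection activities indexed so that committing to an edge $e=\{u,w\}$ consumes a resource attached to each of $u$ and $w$, exploiting the fact that every activity can be used at most once. Second, a block of \emph{garbage collectors}, as in the path reduction (Theorem~\ref{thm:hardness:path:NS}), absorbs all but $k$ of the available selection slots and thereby implements the cardinality bound. Third, several copies of the \emph{stalker game} of Example~\ref{ex:NS:empty} are attached at the center and act as traps: any assignment that puts a ``forbidden'' activity where a stalker can reach it immediately inherits the non-existence behaviour of Example~\ref{ex:NS:empty}. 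The intended correspondence is that the Nash stable feasible assignments are exactly the maximal matchings of $G$ of size at most $k$.

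With the gadget in place, both directions should follow the familiar pattern. From a maximal matching $M$ with $|M|\le k$ I would assign the selectors of the edges of $M$ to their selection activities, route the garbage collectors onto the remaining selection activities, and leave the center, the stalkers, and all other players on a best singleton or on the void activity; maximality of $M$ together with the budget guarantees that no player --- in particular, no stalker and no selector of an uncovered edge --- has a feasible NS-deviation. Conversely, from any Nash stable assignment I would read off the chosen edges and argue that no selection activity is wasted on a stalker (else the instability of Example~\ref{ex:NS:empty} re-appears), that the per-vertex uniqueness of activities forces the chosen edges to be pairwise disjoint, that any edge with both endpoints still free would let its selector deviate profitably and hence that the chosen set is maximal, and finally that the garbage collectors cover only enough slots to leave at most $k$ edges chosen.

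The main obstacle is structural and peculiar to stars: every feasible coalition of size at least two must contain the center, so at most one non-trivial group can form and the matching can \emph{not} be represented by the coalition structure itself. The entire combinatorics of the matching therefore has to be pushed into the singleton assignments of the leaves and into the limited supply of activities, with the center and the stalker copies acting only as global enforcers rather than as genuine coalition partners. The delicate part is to design the preferences and a vertex-indexed supply of selection activities so that all three pressures act at once --- disjointness from per-vertex consumption, maximality from the forced self-selection of an otherwise-uncovered edge, and the budget from the garbage collectors and stalkers --- without creating any spurious Nash stable assignment that fails to be a maximal matching of size at most $k$. Verifying that these requirements are mutually consistent, and that they are realizable using only the local deviations available on a star, is where the real work of the proof lies.
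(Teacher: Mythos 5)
There is a genuine gap here: what you have written is a plan for a reduction, not a reduction. You correctly identify NP membership, the right source problem ({\sc Minimum Maximal Matching} on bipartite graphs, which is also what the paper uses), and the right structural obstacle (on a star, at most one coalition of size $\geq 2$ can form, so the matching must be encoded in singleton assignments of leaves competing for a limited supply of activities). But no activities, preferences, or sizes are ever specified, so neither direction of the equivalence can actually be checked; you concede as much in your final paragraph, and that conceded part is precisely where the theorem lives.

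Moreover, the one concrete mechanism you do commit to --- garbage collectors, imported from the path reduction of Theorem~\ref{thm:hardness:path:NS}, to ``implement the cardinality bound'' --- enforces the wrong direction of the bound. Garbage collectors (together with stalker traps forcing every activity to be used) yield a \emph{lower} bound on how many activities the matching players must consume, which is what {\sc Rainbow Matching} (``at least $k$'') needs; MMM needs an \emph{upper} bound (``at most $k$'') on the matching size, plus maximality. The paper achieves this not by absorbing activities but by absorbing \emph{players}: there is a single activity $a$ approved by the center and by every $V$-vertex player at size exactly $|V|-k+1$, so in any Nash stable outcome the center must form a coalition with $|V|-k$ vertex leaves on $a$ (if it does not, the center ends up playing a second activity $b$ alone or the void activity, and a single stalker leaf who approves only $(b,2)$ recreates the instability of Example~\ref{ex:NS:empty}). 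Only the $k$ leftover leaves can then encode matching edges, each by playing, as a singleton, an activity $u\in U$ with $\{u,v\}\in E$ (activities are indexed by one side of the bipartition only, not by edges); maximality is exactly the statement that no leaf left on $a$ or on the void activity has an NS-deviation to an unused $u$. Note that this directly contradicts your structural claim that the center serves ``only as a global enforcer rather than as a genuine coalition partner'': the center's coalition is the counting device, and that is the idea your proposal is missing.
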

\begin{proof}
To prove NP-hardness, we reduce 
from MMM on bipartite graphs. 
Given a bipartite graph $(U,V,E)$ and an integer $k$, we create a star with center $c$ and 
$|V|+1$ leaves: one leaf for each vertex player $v \in V$ plus one stalker $s$. 
We introduce an activity $u$ for each $u \in U$, and two additional activities $a$ and $b$. 
A player $v \in V$ approves $(u, 1)$ for each activity $u$ such that $\{u, v\}\in E$ as well as  
$(a, |V|-k+1)$ and prefers the former to the latter. That is, 
$(u,1) \succ_{v}(a,|V|-k+1)$ for every $u \in U$ with $\{u, v\}\in E$; $v$ is indifferent among 
the activities associated with its neighbors in the graph, that is, 
$(u,1)\sim_{v} (u^{\prime},1)$ for all $u,u^{\prime} \in U$ such that $\{u, v\}, \{u^{\prime}, v\}\in E$. 
The center player $c$ approves both $(a, |V|-k+1)$ and $(b,1)$, and prefers the former to the latter, 
i.e., $(a,|V|-k+1) \succ_{c} (b,1) \succ_{c} (a_{\emptyset},1)$. 
Finally, the stalker $s$ only approves $(b,2)$. 

We now show that $G$ admits a maximal matching $M$ with at most $k$ edges
if and only if our instance of \gGASP\ admits a Nash stable assignment.
Suppose that $G$ admits a maximal matching $M$ with at most $k$ edges. We construct a feasible assignment $\pi$ 
by setting $\pi(v)=u$ for each $\{u, v\}\in M$, assigning $|V|-k$ vertex players and the center to $a$,
and assigning the remaining players to the void activity. Clearly, the center $c$ has no 
incentive to deviate and no vertex player in a singleton coalition wants to deviate to the coalition of the center. 
Further, no vertex $v$ has an NS-deviation to an unused activity $u$, since if $\pi$ admits such a deviation, this 
would mean that $M\cup \{u,v\}$ forms a matching, a contradiction with the maximality of $M$. 
Finally, the stalker player has no incentive to deviate since 
the center player does not play $b$. Hence, $\pi$ is Nash stable.

Conversely, suppose that there exists a Nash stable feasible assignment $\pi$ and let $M=\{\, \{\pi(v),v\} \mid v \in V 
\land \pi(v) \in U \,\}$. We will show that $M$ is a maximal matching of size at most $k$. By Nash stability, the stalker 
player should not have an incentive to deviate, and hence the center player and $|V|-k$ vertex players are assigned to 
activity $a$. It follows that $k$ vertex players are not assigned to $a$, and therefore 
$|M| \leq k$. Moreover, $M$ is a matching since each vertex player is assigned to at most one activity, 
and by individual rationality 
each activity can be assigned to at most one player. Now suppose towards a contradiction that $M$ is not 
maximal, i.e., there exists an edge $\{u,v\} \in E$ such that $M\cup \{u,v\}$ is a matching. 
This would mean that under $\pi$ no player is assigned to $u$ and $v$ is assigned to the void activity; 
hence, $v$ has an NS-deviation to $u$, contradicting the Nash stability of~$\pi$.
\end{proof}

In the analysis of cooperative games on social networks one can usually assume that the social network
is connected: if this is not the case, each connected component can be processed separately.
This is also the case for \gGASP\ as long as all activities are copyable. However, if each activity
can only be used by a single group, different connected components are no longer independent,
as they have to choose from the same pool of activities. Indeed, we will now show
that the problem of finding Nash stable outcomes remains NP-hard even if the size of each connected component 
is at most four. Our hardness proof for this problem proceeds by reduction from a restricted version of {\sc 3Sat}. 
Specifically, we consider (3,B2)-{\sc Sat}: in this version of {\sc 3Sat}
each clause contains exactly $3$ literals, and each variable occurs exactly twice positively and twice 
negatively. This problem is known to be NP-complete \cite{Berman2003}.

\begin{theorem}\label{thm:hardness:smallcomponent:NS}
Given an instance of \gGASP\ where each connected component of the underlying graph has size at most $4$, it is 
{\em NP}-complete to determine whether it has a Nash stable feasible assignment.
\end{theorem}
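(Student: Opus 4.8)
The plan is to reduce from (3,B2)-{\sc Sat}, following the same overall strategy as the path and star reductions: we build a collection of small connected components (gadgets) together with a shared pool of single-use activities, and arrange things so that a Nash stable outcome exists exactly when the formula is satisfiable. Since each variable occurs twice positively and twice negatively, the degree structure is regular, which should help in bounding component sizes by the constant~$4$.

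First I would introduce, for each variable $x$, a \emph{variable gadget} whose purpose is to force a global truth assignment. The natural idea is to have two competing activities $T_x$ and $F_x$ (``set $x$ true'' / ``set $x$ false''), together with a small coalition whose only stable configurations correspond to choosing exactly one of them; a stalker-style sub-gadget (as in Example~\ref{ex:NS:empty}) can be used to rule out the ``doing nothing'' option and thereby force a genuine choice. Because each variable appears in exactly four literal-occurrences, I would plan to make the activity $T_x$ (resp.\ $F_x$) usable by the two clause gadgets corresponding to the positive (resp.\ negative) occurrences; single-use-ness of activities is precisely what couples the otherwise-disconnected components, just as emphasized in the text preceding the theorem.

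Next I would build a \emph{clause gadget} of size at most~$4$ for each clause $C=(\ell_1\vee\ell_2\vee\ell_3)$. The intent is that the clause gadget can reach a stable assignment only if at least one of its three literals is ``satisfied,'' which I would encode by letting the gadget grab the corresponding literal-activity. The key design constraint is that each component has at most $4$ vertices, so I must realize both the ``choose a literal'' logic and the ``block all-false'' logic within four players and their private activities plus the shared literal-activities. I expect I would need a stalker attached inside each clause gadget that becomes unavoidable (forcing non-stability) precisely when none of the three literals has been made available, thereby enforcing the clause.

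\textbf{The main obstacle} will be the size-$4$ restriction: encoding a $3$-clause together with enough ``pressure'' players (stalkers, garbage collectors) to force satisfaction, while keeping every connected component within four vertices and while ensuring the shared activities are contended for in exactly the right way, is delicate. The correctness argument then splits, as in the previous proofs, into two directions. For the forward direction, given a satisfying assignment I would set each variable gadget to the chosen truth value, route each clause gadget to a satisfied literal's activity, and send leftover players to void or dummy activities, checking no player has a feasible NS-deviation. For the converse, given a Nash stable $\pi$ I would read off a truth assignment from which of $T_x,F_x$ is used, argue via the stalkers that each variable gadget commits to exactly one value and each clause gadget has secured a satisfied literal, and conclude the formula is satisfied; the subtle point to verify carefully is that the single-use activities cannot be ``double-booked'' across components in a way that cheats the encoding, which is exactly where individual rationality and the stalker constructions do the work. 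Membership in NP is immediate since Nash stability of a given assignment is checkable in polynomial time.
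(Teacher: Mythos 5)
Your overall skeleton matches the paper's: a reduction from (3,B2)-{\sc Sat}, variable gadgets plus constant-size clause gadgets that are coupled only through the shared pool of single-use activities, and stalkers used to create instability pressure. However, the specific encoding you sketch breaks the forward direction of the reduction. You introduce one activity per \emph{polarity} ($T_x$, $F_x$), shared by the two clause gadgets containing the corresponding occurrences, and you stabilize a clause gadget by letting it \emph{grab} a satisfied literal's activity. Since each activity can be used by at most one group, your scheme requires not just a satisfying assignment but a system of distinct representatives: every clause must be matched to a \emph{distinct} satisfied polarity-activity. Satisfiability does not imply this. Concretely, take $c_1=c_2=(x\lor y\lor z)$ and $c_3=c_4=(\bar x\lor\bar y\lor\bar z)$; this is a legitimate (3,B2)-{\sc Sat} instance (each variable occurs exactly twice positively and twice negatively) and is satisfiable, e.g., by $x=\mathrm{True}$, $y=z=\mathrm{False}$. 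But under your scheme $c_1,c_2$ would need two distinct true positive polarities and $c_3,c_4$ two distinct false ones, i.e., at least two of $x,y,z$ true and at least two false, which is impossible. So a ``yes''-instance of {\sc Sat} would map to a ``no''-instance of \gGASP. Note that the danger is not the ``double-booking'' you flag at the end; it is that legitimate single-booking is simply insufficient when two clauses can only be served by the same activity.

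The paper's construction avoids this with two design choices you should adopt. First, it uses one activity per \emph{occurrence} ($x_1,x_2,\bar x_1,\bar x_2$), so the activity a clause leaf cares about belongs to that clause alone and there is never contention between clauses. Second, and more importantly, it inverts the encoding: a literal being true corresponds to its occurrence activity being \emph{occupied} by the literal players in the variable components, and the clause gadget is stabilized not by grabbing that activity but by the occupation \emph{blocking} the clause leaf's NS-deviation --- joining a group that sits in a different connected component is infeasible. One leaf plus the stalker then occupy the private clause activity $c$, and that leaf is content precisely because its literal activity is taken elsewhere; leaves of unsatisfied literals play their (free) occurrence activities alone. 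Blocking-by-occupation composes gracefully across components: a single occupation simultaneously blocks every would-be deviator everywhere, whereas stabilizing-by-grabbing demands exclusive use. As written, your plan does not go through; reworked around per-occurrence activities and the occupation-blocks-deviation mechanism, it becomes essentially the paper's proof.
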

\begin{proof}
We reduce from (3,B2)-{\sc Sat}. Consider a formula $\phi$ with variable set $X$ and clause set $C$, 
where for each variable $x\in X$ we write $x_1$ and $x_2$ for the two positive occurrences of $x$, 
and $\bar x_1$ and $\bar x_2$ for the two negative occurrences of $x$. 
For each $x\in X$, we introduce four players $x_1, x_2, {\bar x_1},{\bar x_2}$,
which correspond to the four occurrences of $x$. 
For each clause $c\in C$, we introduce one stalker $s_{c}$ 
and three other players $c_1,c_2$, and $c_3$.
The network $(N,L)$ consists of one component for each clause---a star with center $s_{c}$ and leaves  
$c_1$, $c_2$, and $c_3$---and of two components for each variable $x\in X$ consisting of a single edge each: $\{x_1, x_2\}$ and $\{\bar x_1, \bar x_2\}$.
Thus,
the size of each component of this graph is at most~$4$.

For each $x\in X$ we introduce one variable activity $x$, 
two positive literal activities $x_1$ and $x_2$, 
two negative literal activities ${\bar x_1}$ and ${\bar x_2}$, 
and two further activities $a_{x}$ and ${\bar a_{x}}$. 
Also, we introduce an activity $c$ for each clause $c \in C$. 
Thus,
\[
A^{*}=\bigcup_{x \in X}\{x,x_1,x_2,{\bar x_1},{\bar x_2},a_x,{\bar a_{x}}\} \cup C.
\]
For each $x\in X$ the preferences of the positive literal players $x_1$ and $x_2$ are given as follows:
\begin{align*}
&x_1:~(x,2) \succ (x,1) \succ (x_1,1)\succ (x_2,2) \succ (a_{x},1)\succ (a_{\emptyset},1),\\
&x_2:~(x,2)\succ (x_2,1)\succ (x_1,2) \succ (a_{x},2)\succ (a_{\emptyset},1).
\end{align*}
Similarly, for each $x\in X$ the preferences of the negative literal players ${\bar x_1}$ and ${\bar x_2}$ are given as follows:
\begin{align*}
&{\bar x_1}:~(x,2) \succ (x,1) \succ ({\bar x_1},1)\succ ({\bar x_2},2) \succ ({\bar a_{x}},1)\succ (a_{\emptyset},1),\\
&{\bar x_2}:~(x,2)\succ ({\bar x_2},1)\succ ({\bar x_1},2) \succ ({\bar a_{x}},2)\succ (a_{\emptyset},1).
\end{align*}
In a Nash stable assignment none of the activities $a_{x}$, ${\bar a_{x}}$, $a_\emptyset$ 
can be assigned to literal players. Hence, there are only two possible cases: first, both players $x_1$ and $x_2$ 
are assigned to $x$, and players ${\bar x_1}$ and ${\bar x_2}$ are assigned to ${\bar x_1}$ and ${\bar x_2}$, 
respectively; second, both players ${\bar x_1}$ and ${\bar x_2}$ are assigned to $x$, and players $x_1$ and $x_2$  
are assigned to $x_1$ and $x_2$, respectively.

For players in $N_{c}$ where $c=\ell^c_1 \lor \ell^c_2 \lor \ell^c_3$, the preferences are given by
\begin{align*}
&c_r:~(\ell^c_{r},1)\succ (c,2) \succ (a_{\emptyset},1), \qquad (r = 1,2,3)\\
&s_{c}:~(\ell^c_{1},2) \sim (\ell^c_{2},2) \sim (\ell^c_{3},2) \sim (c,2)\succ (a_{\emptyset},1).
\end{align*}
That is, players $c_1$, $c_2$, and $c_3$ prefer to engage alone in their approved literal activity, 
whereas $s_{c}$ wants to join one of the adjacent leaves whenever 
$\pi(s_c)=a_\emptyset$ and that leaf is assigned a literal activity; 
however, the leaf would then prefer to switch to the void activity. 
This means that if there exists a Nash stable outcome, 
at least one of the literal activities must be used outside of $N_{c}$, 
and some leaf and the stalker $s_{c}$ must be assigned to activity $c$.
We will show that $\phi$ is satisfied by some assignment if and only if there exists a Nash stable outcome.

Suppose that there exists a truth assignment that satisfies $\phi$. First, for each variable $x$ that is set to True, 
we assign positive literal activities $x_1$ and $x_2$ to the positive literal players $x_1$ and $x_2$, respectively, 
and assign $x$ to the negative literal players ${\bar x_1}$ and ${\bar x_2}$. For each variable $x$ that is set to False, 
we assign negative literal activities ${\bar x_1}$ and ${\bar x_2}$ to 
the  negative literal players ${\bar x_1}$ and ${\bar x_2}$, respectively, and assign $x$ to 
the positive literal players $x_1$ and $x_2$. Note that this procedure uses at least one of the literal activities $\ell^c_{1}$, 
$\ell^c_{2}$ and $\ell^c_{3}$ of each clause $c \in C$, since the given truth assignment satisfies $\phi$. Then, for each 
clause $c \in C$, we select a player $c_{j}$ whose approved activity $\ell^c_{j}$ has been assigned to some literal player, 
and assign $c_{j}$ and the stalker to $c$, and the rest of the clause players to their approved literal activity 
if it is not used yet, and to the void activity otherwise. It is easy to see that the resulting assignment $\pi$ is Nash 
stable.

Conversely, suppose that there exists a Nash stable feasible assignment $\pi$. By Nash stability, for each variable $x 
\in X$, either a pair of positive literal players $x_1$ and $x_2$ or a pair of negative literal players ${\bar 
x_1}$ and ${\bar x_2}$ should be assigned to the corresponding pair of literal activities; in addition, for each clause 
$c \in C$, the stalker $s_{c}$ and one of the players $c_1$, $c_2$, and $c_3$ should engage in the activity $c$, thereby 
implying that the approved literal activity of the respective leaf
should be assigned to some literal players. Then, take the truth assignment 
that sets the variable $x$ to True if its positive literal players $x_1$ and $x_2$ 
are assigned to positive literal activities $x_1$ and $x_2$; 
otherwise, $x$ is set to False. This assignment can be easily seen to satisfy $\phi$.
\end{proof}

\section{Fixed Parameter Tractability}
In the instances of \gGASP\ that are created in our hardness proofs, the number of activities 
is unbounded. It is thus natural to 
wonder what can be said when there are few activities to be assigned. We will show that 
for each of the restricted families of graphs considered in the previous section, 
\gGASP\ is fixed parameter tractable with respect to the number of activities. 

The basic idea behind each of the three algorithms
is that we fix a set of activities that will be assigned to the players, 
and for each possible subset $B\subseteq A^*$ of activities we 
check whether there exists a stable assignment using the activities
from that subset only. Our algorithms for paths and for small components use dynamic programming, 
allowing us to build up the set $B$ step-by-step.

We begin by giving the dynamic program that works for paths. We consider the path from left to right,
and, for each initial segment of the path, guess a set $B^{\prime} \subseteq B$ of activities that will be used in that segment of players.
For each choice of these sets, we keep track of whether it is possible to construct an assignment that does not admit an NS-deviation within the initial segment under consideration.

\begin{theorem}\label{thm:FPT:path:NS}
There exists an algorithm that, given an instance of \gGASP\ whose underlying graph is a path, 
checks whether this instance has a Nash stable feasible 
assignment and finds one if it exists, and runs in time $O(4^{p}pn(p+n))$
\end{theorem}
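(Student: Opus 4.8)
The plan is to design a left-to-right dynamic program over the path, where the key insight is that any feasible assignment partitions the path into contiguous \emph{blocks}, each block being a maximal set of consecutive players assigned to the same non-void activity (with singleton void-players forming their own trivial blocks). Since activities can only be used once, the set of activities consumed so far is a genuine state we must track; because we are parameterizing by $p$, we can afford to enumerate subsets of $A^*$. First I would fix, as an outer loop over all $2^p$ subsets $B \subseteq A^*$, the collection of activities permitted to be used, and search for a Nash stable assignment that uses exactly the activities in $B$ (each at most once). Restricting attention to a fixed usable set is what lets the dynamic program decide, locally, whether an activity is still ``available'' to the right.

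The core of the argument is the DP table itself. Processing players $1,\dots,n$ from left to right, I would let the state after deciding player $i$ record: (i) which activity player $i$ is assigned and the size of the block containing $i$ \emph{so far} (the block may still grow to the right), and (ii) the subset $B' \subseteq B$ of activities already consumed by players $1,\dots,i$. For each such state I store a Boolean indicating whether the prefix $1,\dots,i$ admits a feasible partial assignment consistent with the state in which no player in $1,\dots,i-1$ has an NS-deviation to any activity (either a used block entirely to the left, or an unused activity in $B\setminus B'$ that could be joined). The transition from $i$ to $i{+}1$ either extends $i$'s current block by giving $i{+}1$ the same activity (incrementing the block size, subset $B'$ unchanged), or closes $i$'s block and opens a new one for $i{+}1$ (adding $i{+}1$'s activity to $B'$, or assigning the void activity). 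When a block closes, its final size is known, so I can check individual rationality for all its members and, crucially, verify that no player inside that block wants to deviate across the block boundary to its neighbor's block — exactly the ``border'' condition already used in Theorem~\ref{thm:NS:copyable}. An assignment is Nash stable overall iff the last player's block is closed in a state certifying no outstanding NS-deviation and using only activities in $B$.

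The delicate point, and the step I expect to be the main obstacle, is correctly handling NS-deviations to \emph{unused} activities. A player may wish to deviate not only to a neighboring block but to any singleton copy of an unused activity $a \in B \setminus B'$; since a deviation to activity $a$ forms the coalition $\pi^a \cup \{i\}$, and $\pi^a = \emptyset$ for unused $a$, feasibility requires that $\{i\}$ alone be connected, which it always is, so such a deviation is feasible whenever $i$ strictly prefers $(a,1)$ to its current alternative. Thus stability against unused-activity deviations reduces to checking, for each player, that its current alternative is weakly preferred to $(a,1)$ for every $a$ that remains unused throughout the \emph{entire} run. This is why fixing the global usable set $B$ up front is the right move: an activity is ``unused'' precisely when it lies in $A^* \setminus B$, a condition known before the DP begins, so each player's block-opening and block-extension choices can be filtered by the requirement $(\pi(i),|\pi_i|) \succeq_i (a,1)$ for all $a \notin B$. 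I would fold this filter into the IR check performed when a block closes.

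For the running time, there are $2^p$ choices of $B$; within each, the DP has $n$ player-positions, $O(p)$ choices of current activity, $O(n)$ choices of current block size, and $2^p$ choices of the consumed subset $B'$, and each transition performs at most $O(p+n)$ work to verify border and IR/unused-activity conditions. Multiplying and simplifying yields the claimed bound $O(4^p p n (p+n))$, the factor $4^p = 2^p \cdot 2^p$ coming from the product of the outer subset loop and the inner consumed-subset coordinate of the state. Finally, the assignment witnessing a positive answer is recovered by the standard back-pointer trace through the DP table.
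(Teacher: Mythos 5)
Your overall architecture matches the paper's: an outer enumeration of the set $B \subseteq A^*$ of activities to be used, a left-to-right dynamic program whose state tracks the consumed subset $B' \subseteq B$, and the observation that deviations to globally unused activities reduce to the per-player check $(\pi(i),|\pi_i|) \succeq_i (a,1)$ for all $a \notin B$. However, your DP state is missing one essential component, and this creates a genuine gap: with the state you describe --- (activity of player $i$, size of $i$'s block \emph{so far}, consumed set $B'$) --- the border conditions between adjacent blocks cannot be verified at any point of the computation. The check at the boundary between consecutive blocks with final alternatives $(b,\ell)$ and $(a,k)$ is two-sided: the last player of the left block must satisfy $(b,\ell) \succeq (a,k+1)$, and the first player of the right block must satisfy $(a,k) \succeq (b,\ell+1)$, so it requires \emph{both} final sizes simultaneously. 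In your DP, when the left block closes the right block's final size is not yet determined, and by the time the right block closes the left block's alternative has been discarded from the state, so neither inequality is ever checkable. You propose to perform this check ``when a block closes,'' citing the border condition of Theorem~\ref{thm:NS:copyable}, but that condition is only checkable there because the state in that proof has the form $f_i((a,k),t)$: it carries the \emph{guessed final size} $k$ of the current group in addition to the running count $t$. The paper's path DP does the same: its table is $f_{i}(B,B',(a,k),t)$, where $(a,k)$ is the intended final alternative of the block under construction; a new block is opened by a transition that requires the previous block to be complete (an entry $f_{i-1}(B,B'\setminus\{a\},(b,\ell),\ell)$ whose count equals its final size), at which moment both $(b,\ell)$ and $(a,k)$ are in hand and the two border inequalities are checked. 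Your algorithm would need this extra state component (or, alternatively, to carry the previous block's completed alternative in the state and defer both border checks to closing time); as written, it fails to rule out feasible NS-deviations of boundary players into adjacent groups, so it can report ``Nash stable'' on instances that admit no Nash stable assignment.

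A smaller point: your running-time accounting does not multiply out to the claimed bound --- $2^p$ choices of $B$, times $n \cdot p \cdot n \cdot 2^p$ states, times $O(p+n)$ transition work, is $O(4^p p n^2 (p+n))$, an extra factor of $n$ --- and repairing the state as above changes the count again. This is secondary to the missing border check, but the accounting should be redone once the state is fixed.
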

\begin{proof}
Suppose that $N=[n]$ and $L=\{\, \{i,i+1\} \mid i=1,2,\ldots,n-1\,\}$.
First, we guess a subset $B\subseteq A^*$ of non-void activities to be used; there are $2^p$
possibilities, so we try them all. 
For each $B$, we solve the problem by dynamic programming. 
For each $i\in[n]$, 
each $B^{\prime} \subseteq B$, 
each alternative $(a,k) \in B^{\prime} \times [n] \cup \{(a_{\emptyset},1)\}$, 
and each number $t \in [k]$,  
we let $f_{i}(B,B^{\prime},(a,k),t)$ be \emph{true} if there exists a feasible assignment 
$\pi:[i] \rightarrow B^{\prime}\cup \{a_{\emptyset}\}$
with the following properties:
\begin{itemize}
\item each activity in $B^{\prime}$ is assigned to some player in $[i]$;
\item the $t$ players in $\{i-t+1,i-t+2,\ldots,i\}$ belong to the same group as $i$ 
      and weakly prefer $(a,k)$ to $(b, 1)$ for each $b\in A\setminus B$;
\item player $i-t+1$ weakly prefers $(a,k)$ to the coalition he would end up in by joining his predecessor;
\item player $i-t$ weakly prefers his alternative at $\pi$ to $(a,k+1)$; and
\item the rest of the players in $\{1,2,\ldots,i-t\}$ weakly prefer their alternative under $\pi$ 
     to engaging alone in any of the activities in $A\setminus B$ 
     and have no NS feasible deviation to activities in $B^{\prime}$.
\end{itemize}
Otherwise, we let $f_{i}(B,B^{\prime},(a,k),t)$ be \emph{false}.

For player $i=1$, if $B^{\prime}=\{a\}$, $t=1$, and player $1$ weakly prefers $(a,k)$ to 
each alternative $(b, 1)$ such that $b\in A\setminus B$, 
we set $f_{1}(B,B^{\prime},(a,k),t)$ to {\em true} and otherwise to {\em false}.

For $i=2,\ldots,n$, $f_{i}(B,B^{\prime},(a,k),t)$ is {\em true} only if player $i$ weakly prefers $(a,k)$ 
to $(b, 1)$ for each $b\in A\setminus B$, and in addition, either
\begin{itemize}
\item $t=1$, and players $i$ and $i-1$ can be \emph{separated} from each other, 
i.e., there exists $(b,\ell) \in X$ such that $f_{i-1}(B,B^{\prime}\setminus\{a\},(b,\ell),\ell)$ is true, 
$(b,\ell) \succeq_{i-1} (a,k+1)$ and $(b,\ell+1) \succeq_{i} (a,k)$, or 
\item $t \geq 2$ and $f_{i-1}(B,B^{\prime},(a,k),t-1)$ is {\em true}.
\end{itemize}
If the condition above is not satisfied, then $f_{i}(B,B^{\prime},(a,k),t)$ is set to {\em false}. 
It is not difficult to see that a Nash stable assignment exists if and only if $f_{n}(B,B,(a,k),k)$ 
is {\em true} for some alternative 
$(a,k) \in X$ and some $B \subseteq A^{*}$. The bound on the running time is immediate.
\end{proof}

Our algorithm for networks with small connected components is similar to the dynamic program we just discussed. 
We essentially pretend that the components are arranged in a ``path'', and run the algorithm as before. Within each component, we have enough time to consider all possible assignments, allowing us to treat components as ``big vertices''. The resulting algorithm is FPT with respect to the combined parameter $p+c$, where $c$ is a bound on the size of the components of the network.

\begin{theorem}
	\label{thm:FPT:smallcomponents:NS}
	There exists an algorithm that given an instance of \gGASP\ on a graph with constant-size connected components 
	checks whether it has a Nash stable feasible assignment, finds one if it exists, 
	and runs in time $O(p^{c}8^{p}kn^2)$, where $c$ is the maximum size of a connected component 
	and $k$ is the number of connected components.
\end{theorem}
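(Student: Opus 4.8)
The plan is to adapt the dynamic program of Theorem~\ref{thm:FPT:path:NS}, processing the connected components one at a time in an arbitrary order, as if they were the vertices of a path. The crucial structural observation is that, since every component has at most $c$ vertices and every feasible coalition must be connected, no group can straddle two components: each group lies entirely inside a single component. Hence the components interact only through the shared pool of activities. Because each activity is used at most once, two components can never be assigned the same activity; and because a player can only deviate to an activity whose current group together with that player stays connected, the only thing a player needs to know about the world outside her own component is a single bit per activity, namely whether that activity is used \emph{anywhere}. As before, I would first guess the set $B\subseteq A^*$ of activities that are actually used, try all $2^p$ possibilities, and insist that the final assignment use \emph{exactly} $B$.

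Fix such a $B$. I would next pin down the feasible NS-deviations of a player $i$ in a component $C_j$. Since $\pi^a\cup\{i\}$ must be connected, the only options are (i) to join a group that already lies inside $C_j$ and is adjacent to $i$ (an internal deviation), or (ii) to start a fresh singleton in a globally unused activity, i.e.\ one outside $B$ (using that the assignment uses exactly $B$). A deviation to an activity occupied by another component is infeasible, and such an activity is also unavailable for a new singleton, so no cross-component deviation ever needs to be considered. Call an assignment of $C_j$ \emph{locally valid} (with respect to $B$) if it is feasible, individually rational, admits no internal NS-deviation, and leaves no player preferring a singleton in some activity of $A\setminus B$ to her current alternative. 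The key lemma to establish is then: a feasible assignment using exactly the activities of $B$ is Nash stable if and only if its restriction to every component is locally valid and the activity sets used by different components are pairwise disjoint with union $B$. Verifying that Nash stability really does decompose in this way---that the local checks together with the single global disjointness/exact-cover constraint capture \emph{all} feasible deviations, with nothing overlooked across components---is the step I expect to require the most care.

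With this lemma in hand the algorithm becomes a subset-cover dynamic program. Order the components as $C_1,\dots,C_k$. For each $C_j$, precompute the family $\mathcal U_j\subseteq 2^B$ of those activity-subsets that are used exactly by some locally valid assignment of $C_j$; this is done by enumerating all $\le (p+1)^{c}$ ways of mapping the $\le c$ players of $C_j$ to activities in $B$ or to $a_\emptyset$, discarding the infeasible ones, and testing local validity of the rest (which, after an $O(np)$ preprocessing of each player's favourite singleton outside $B$, costs $O(c^2)$ time per candidate). Then let $g_j(B')$ be \emph{true} iff $C_1,\dots,C_j$ admit locally valid assignments whose used-activity sets are pairwise disjoint with union $B'$, computed by the OR-convolution $g_j(B')=\bigvee_{B''\in\mathcal U_j,\,B''\subseteq B'}g_{j-1}(B'\setminus B'')$ from the base case $g_0(\emptyset)=\mathit{true}$. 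A Nash stable outcome using exactly $B$ exists iff $g_k(B)$ is \emph{true}, and one is recovered by standard back-tracking through the table. Ranging over all $2^p$ choices of $B$ and accounting for the $O(p^c)$ per-component enumeration, the subset convolution, and the polynomial cost of the preference comparisons yields the claimed $O(p^{c}8^{p}kn^2)$ running time.
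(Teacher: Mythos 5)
Your proposal is correct and takes essentially the same route as the paper: the paper also guesses $B$, processes the components in sequence, and defines $f_i(B,B')$ to be true iff the first $i$ components admit an individually rational feasible assignment using exactly $B'$ with no NS-deviation to an activity in $B'$ or in $A^*\setminus B$, with a transition that bipartitions $B'$ into a part for the earlier components and a part for $N_i$ --- which is exactly your cover DP with the families $\mathcal{U}_j$ folded into the recursion. The decomposition lemma you flag as the delicate step (cross-component deviations are infeasible, so only internal deviations and singleton deviations to globally unused activities matter, provided the cover of $B$ is exact) is precisely what the paper leaves implicit in its ``it is not difficult to see'' claim, and your running-time accounting matches its $O(p^{c}8^{p}kn^2)$ bound.
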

\begin{proof}
	We give a dynamic programming algorithm. Suppose our graph $(N,L)$ has $k$ connected components $(N_{1},L_{1}), 
	(N_{2},L_{2}), \ldots, (N_{k},L_{k})$. For each $i\in[k]$, each set $B \subseteq A^{*}$ of activities 
	assigned to $N$, and each set $B^{\prime} \subseteq B$ of activities assigned to $\bigcup^{i}_{j=1}N_{j}$, we let 
	$f_{i}(B,B^{\prime})$ denote whether there is such an assignment that gives rise to a Nash stable outcome. 
	Specifically, $f_{i}(B,B^{\prime})$ is \emph{true} if and only if there exists an individually rational feasible 
	assignment $\pi:\bigcup^{i}_{j=1}N_{j} \rightarrow A$ such that
	\begin{itemize}
		\item $\pi$ uses exactly the activities in $B^{\prime}$, 
		i.e., $\pi^{b}\neq \emptyset$ for all $b \in B^{\prime}$ and 
		$\pi^{b}= \emptyset$ for all $b \in A^{*}\setminus B^{\prime}$, and
		\item no player in $\bigcup^{i}_{j=1}N_{j}$ has an NS-deviation to 
		an activity in $B^{\prime}$ or to an activity in $A^*\setminus B$.
	\end{itemize}
	
	For $i=1$, each $B \subseteq A^{*}$, and each $B^{\prime}\subseteq B$, we compute the value of $f_{1}(B,B^{\prime})$ 
	by trying all possible mappings $\pi:N_1 \rightarrow B^{\prime} \cup \{a_{\emptyset}\}$, and checking whether it is an 
	individually rational feasible assignment using all activities in $B^{\prime}$ and such that no player in $N_{1}$ has 
	an NS-deviation to a used activity in $B^{\prime}$ or an unused activity in $A^*\setminus B$.
	For $i=2,3,\ldots,k$, each $B \subseteq A^{*}$, and $B^{\prime} \subseteq B$, we set 
	$f_{i}(B,B^{\prime})$ to {\em true} if there exists a bipartition of $B^{\prime}$ into $P$ and $Q$ such that $f_{i-1}(B,P)$ 
	is true and there exists a mapping $\pi:N_{i} \rightarrow Q\cup \{a_{\emptyset}\}$ such that $\pi$ is an individually 
	rational feasible assignment using all the activities in $Q$, and no player in $N_{i}$ has an NS-deviation to a used 
	activity in $Q$ or an unused activity in $A^{*}\setminus B$. It is not difficult to see that a Nash stable solution 
	exists if and only if $f_{k}(B,B)$ is {\em true} for some $B \subseteq A^{*}$. If this is the case, such a stable feasible 
	assignment can be found using standard dynamic programming techniques. The bound on the running time is immediate.
\end{proof}

For networks given by star graphs, we use a different technique to obtain an FPT result, namely (derandomized) color coding. The algorithm begins by guessing the alternative $(a,k)$ assigned to the center player. Next, we again guess the precise set $B$ of activities in use by the players not assigned to alternative $(a,k)$. We then randomly color leaf players by activities in $B$ (or by the void activity), rejecting colorings that are infeasible or must lead to NS deviations. Crucially, the latter task reduces to straightforward counting questions, which allows this method to succeed.

\begin{theorem}
	\label{thm:FPT:star:NS}
There exists an algorithm that given an instance of \gGASP\ on a star checks whether it has a Nash stable feasible 
assignment, finds one if it exists, and runs in time $O(2^pp^{p+1}n \log n)$.
\end{theorem}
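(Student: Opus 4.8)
The plan is to exploit the rigid structure of feasible coalitions on a star. First I would observe that in a star with center $c$, any connected set of size at least two must contain $c$; since each non-void activity is used by at most one group, it follows that in every feasible assignment $\pi$ there is at most one group of size $\geq 2$, namely the one containing $c$, while every other non-void group is a single leaf. Thus an outcome is completely described by (i) the alternative $(a,k)$ of the center (with $a=a_\emptyset$, $k=1$ allowed), (ii) the set $B\subseteq A^*$ of activities used by leaves outside the center's group, and (iii) a placement of the leaves into the roles ``member of the center's group'', ``lone user of some $b\in B$'', or ``void''. I would therefore enumerate $(a,k)$ and $B$; this also fixes the set $U=A^*\setminus(\{a\}\cup B)$ of \emph{unused} activities.

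The crucial simplification is that, once $(a,k)$ and $B$ (hence $U$) are fixed, every stability requirement becomes a \emph{local} predicate on a single player. On a star the only feasible deviations are: a leaf joining the center's group (always feasible, giving $(a,k+1)$), any player moving to a currently unused activity as a singleton, and the center joining a lone leaf; deviations that would merge two leaves are infeasible and can be ignored. Hence I would precompute, for each leaf $v$, three Boolean predicates depending only on the guesses: whether $v$ may sit in the center's group (individual rationality for $(a,k)$ together with ``$v$ does not want to jump to any $d\in U$''); whether $v$ may be the lone user of a given $b\in B$ (individual rationality for $(b,1)$, no wish to join the center at $(a,k+1)$, and no wish to jump to any $d\in U$); and whether $v$ may take the void activity (no wish to join the center, no wish to jump to any $d\in U$). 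The center's own stability is checked directly against $(a,k)$, $B$ and $U$. What remains is a combinatorial feasibility question: place exactly $k-1$ eligible leaves in the center's group, choose a system of distinct representatives assigning each $b\in B$ to a distinct eligible leaf, and verify that every leftover leaf is void-eligible.

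The main obstacle is the distinctness constraint coupling the choice of representatives for $B$: a single leaf may be eligible for several activities, so a greedy per-activity choice can fail, and we must simultaneously guarantee that the leaves we do not use can be absorbed into the void or the center's group of the prescribed size. This is exactly where I would invoke color coding. I would color the leaves by the activities in $B$ and argue that, for any valid assignment, the $|B|$ representative leaves receive pairwise distinct colors under at least one coloring drawn from a perfect hash family of size $O(p^{p}\log n)$; enumerating this family deterministically replaces the random coloring. For a fixed coloring the remaining work is purely a matter of \emph{counting}: for each color $b$ we check that some leaf colored $b$ is eligible for $b$ and earmark it as the representative, and then, over all leftover leaves, we count those that are forced into the center's group (being neither void-eligible nor usable as a representative) against those eligible for it, confirming that precisely $k-1$ can be seated there and that all others can be sent to the void. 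Each such check is linear in $n$. Combining the $2^p$ choices of $B$, the choices for the center's alternative, the $O(p^{p}\log n)$ colorings and the linear-time counting, a careful accounting yields the claimed $O(2^{p}p^{p+1}n\log n)$ bound.
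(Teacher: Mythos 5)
Your overall architecture coincides with the paper's: guess the center's alternative $(a,k)$ and the set $B$ of activities used by lone leaves, note that all Nash-stability requirements then reduce to local eligibility predicates on single players (center-eligible, rep-eligible for a given $b\in B$, void-eligible), color-code the leaves with colors from $B$, and derandomize via a perfect hash family. The gap is in your per-coloring subroutine. You ``check that some leaf colored $b$ is eligible for $b$ and earmark it as the representative'' and then settle everything else by counting; but the choice of representative \emph{inside} a color class is not a local decision, because it determines which class members are left over, and hence both how many leaves are forced into the center's group and how many are available to fill it to size exactly $k-1$. A greedy earmark can reject colorings that are compatible with a Nash stable assignment. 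Concretely, take $B=\{b\}$ (so all leaves are colored $b$), $k-1=1$, and $N_b=\{x,y\}$ where $x$ is rep-eligible and center-eligible but not void-eligible, while $y$ is rep-eligible and void-eligible but not center-eligible. The stable completion makes $y$ the representative and seats $x$ with the center; if your algorithm earmarks $x$, the leftover leaf $y$ can neither be seated nor supply the required group size, and the coloring is wrongly rejected. Your parenthetical definition of ``forced'' leaves (``neither void-eligible nor usable as a representative'') does not repair this: if ``usable as a representative'' refers to the already-fixed earmark, the count inherits the bad greedy choice; if it means rep-eligible in principle, then two leaves of one class can both be declared non-forced even though only one of them can actually serve as the representative. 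Note also that color coding itself does not solve this absorption problem, contrary to what your third paragraph suggests --- it only enforces distinctness of the representatives across activities.

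This is exactly where the paper's proof does more work: for each color $b$ it runs a dynamic program $f_b(i,\ell)$ ranging over \emph{all} candidate representatives $i\in N_b$ and all numbers $\ell$ of class members sent to activity $a$ (the rest must be void-eligible), and then a second, subset-sum-style program $f(\ell)$ over the color classes decides whether the per-class counts can be combined so that exactly $k$ players (including the center) end up on $a$. Your counting idea could be salvaged without this DP: for each class compute the minimum and maximum number of members that can be sent to the center over all valid representative choices, prove that the achievable counts form an integer interval (an exchange argument; the endpoints shift by at most one as the representative varies), and test whether $k-1$ lies in the sum of these intervals. But that interval argument is precisely the missing ingredient; as written, the earmark-then-count step is unsound, and it is the crux of the algorithm rather than a detail.
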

\begin{proof} 
For each $(a,k) \in X$ and $B \subseteq A^* \setminus \{a\}$, we will check whether there exists a Nash stable 
assignment such that the center $c$ and $k-1$ leaves engage in $a$, exactly $|B|$ leaf players are assigned to activities in $B$, 
and the rest of the players are assigned to the void activity. We will check whether the center player $c$ weakly prefers $(a,k)$ to every alternative $(b,\ell) \in B \times \{2\}\cup (A \setminus B) \times \{1\}$. If this is the case, we will proceed; otherwise, there is no Nash stable outcome with the above properties, since the center player would have an incentive to deviate. 

Now we will check whether there is an assignment of activities in $B$ to leaf players that gives rise to a Nash stable outcome.
To this end, we use the color-coding technique to design a 
randomized algorithm. We `color' each leaf player using colors in $B$ independently and uniformly 
at random. Suppose that there exists a Nash stable assignment $\pi$ as described above. Then, the probability that the players who engage in activities from $B$
on their own (denote this set by $S$) are assigned these activities by a 
coloring $\chi$ chosen at random is $|B|^{-|B|}$: there are $|B|^{n-1}$ possible colorings, and $|B|^{n-1-|B|}$ of them coincide with $\pi$ on $S$. 
We can then derandomize our algorithm using a 
family of $k$-perfect hash functions \cite{Alon1995}.

Now, fix a coloring $\chi:N\setminus \{c\} \rightarrow B$. We seek to assign each player $i \in N\setminus\{c\}$ to one of the activities, namely either to $a$, to $\chi(i) \in B$, or to $a_{\emptyset}$, 
in such a way that exactly one agent of each color engages in the color activity 
and $k$ players including the center are assigned to $a$. We will show that there 
exists a polynomial-time algorithm that finds a Nash stable outcome compatible with $\chi$, 
or determines that no such assignment exists.
For each $b \in B$, we denote by $N_{b}=\{\, i \in N \setminus \{c\} \mid \chi(i)=b\,\}$ the set of players 
of color $b$.
 
For each color $b \in B$, each $i \in N_b$, and $\ell=0,1,\ldots,k-1$, we will first determine whether $i$ can be assigned to $b$, and exactly $\ell$ players in $N_b$ can be assigned to activity $a$; we denote this subproblem by $f_b(i,\ell)$.
We initialize $f_b(i,\ell)$ to {\em true} if $\ell=0$, player $i$ weakly prefers $(b,1)$ to every alternative $(b^{\prime},1)$ such that $b^{\prime} \in A \setminus B$, and the players $i$ and $c$ can be {\em separated} from each other, i.e., $a=a_{\emptyset}$ or $i$ weakly prefers $(b,1)$ to $(a,k+1)$.
Otherwise, we set $f_b(i,\ell)$ to {\em false}. 
We then iterate through all the players $j \in N_b \setminus \{i\}$ and update $f_b(i,\ell)$ one by one. Specifically, we set $f_b(i,\ell)$ to {\em true} if 
\begin{itemize}
\item $f_b(i,\ell-1)$ is true and player $j$ can be assigned to $a$, i.e., $j$ weakly prefers $(a,k)$ to every alternative $(b^{\prime},1)$ such that $b^{\prime} \in A \setminus B$; or
\item $f_b(i,\ell)$ is true and player $j$ can be assigned to $a_{\emptyset}$, i.e., $j$ weakly prefers $(a_{\emptyset},1)$ to every alternative $(b^{\prime},1)$ such that $b^{\prime} \in A \setminus B$ and $(a,k+1)$.
\end{itemize}
Otherwise, we set $f_b(i,\ell)$ to {\em false}. 

Now, we will determine whether there are exactly $\ell$ players in $N$ who can be engaged in $a$; we denote this subproblem by $f(\ell)$. For each $\ell \in [k]$, we initialize $f(\ell)$ to {\em true} if $\ell=1$ and $f(\ell)$ to {\em false} otherwise. Then, we iterate through all the colors $b \in B$ and update $f(\ell)$: for each $b \in B$ and each $\ell=k,k-1,\ldots,1$, we set $f(\ell)$ to {\em true} if there exists $x \in \{0,1,\ldots,k-1\}$ such that $f_b(j,x)$ is true for some $j \in N_b$ and $f(\ell-x)$ is true; otherwise, we set $f(\ell)$ to {\em false}. 

Finally, we reject the coloring if $f(k)$ is false. It is clear that the algorithm does not reject the coloring if there exists a Nash stable feasible assignment that is compatible with $\chi$. We omit the proof for the bound on the running time.
\end{proof}

\section{Core stability}
By adapting the reductions for Nash stability, we can show that checking the existence of a core stable outcome is also NP-hard. This result holds for all classes of graph families that we have considered.

\begin{theorem}\label{thm:hardness:path:core}
Given an instance of \gGASP\ whose underlying graph is a path, a star, or 
has connected components whose size is bounded by $3$, 
it is {\em NP}-complete to determine whether it has a core stable feasible assignment.
\end{theorem}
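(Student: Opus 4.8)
The plan is to recycle the three reductions used for Nash stability, keeping their global architecture intact, and to replace only the \emph{stalker} sub-gadgets---whose sole purpose was to destroy Nash stability in the ``no'' case---by small \emph{conditional core-breaker} gadgets that destroy \emph{core} stability under the same circumstances. Concretely, I would reduce from {\sc Path Rainbow Matching} for paths (as in Theorem~\ref{thm:hardness:path:NS}), from {\sc Minimum Maximal Matching} for stars (as in Theorem~\ref{thm:hardness:star:NS}), and from (3,B2)-{\sc Sat} for bounded components (as in Theorem~\ref{thm:hardness:smallcomponent:NS}). Membership in NP is not automatic for the core, but on all three graph families it still holds: since $\pi^{a}=\pi_i$ is always a connected set, every feasible (connected) strongly-blocking coalition lives inside a single component and must contain all current occupants of its target activity; on paths the candidate coalitions are intervals, on stars they are the center together with a greedily chosen set of leaves, and on bounded components there are only $O(2^{c})$ of them, so a given assignment can be checked for strong blocking in polynomial time.

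The key design principle is to emulate Example~\ref{ex:core:empty}. The garbage-collector players carry over verbatim: an idle player approving $(c,1)$ strongly blocks with $c$ whenever $c$ is unused, so, exactly as before, these players force every ``relevant'' activity to be used and absorb the surplus ones. What changes is the mechanism that prevents an activity from being \emph{dumped} in the wrong place. In the Nash proofs this was done by the stalker, relying on a single player's NS-deviation; for the core this fails, because the requirement $\pi^{a}\subseteq S$ means a player sitting happily alone on an activity cannot be dislodged. Instead I would attach, to each activity that must be routed to a prescribed location (a matching triple, a satisfying literal, or the activity $a$ in the star), a constant-size gadget built on the pattern of Example~\ref{ex:core:empty}: a coalition that admits a core-stable internal configuration \emph{if and only if} the designated activity is used outside the gadget. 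If the activity is drained away the gadget relaxes to a stable state; if it remains available to the gadget, the cyclic chain of strongly-blocking coalitions of Example~\ref{ex:core:empty} leaves no core-stable global outcome. For bounded components this gadget can be made to have size $3$, which is why the hardness there holds already for $c=3$.

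The structural fact that makes the transfer go through is the localization noted above: because each non-void activity is used by exactly one connected group, the only way two components can interact through the core is via an \emph{unused} activity that some connected coalition would like to grab---precisely the role played by NS-deviations to unused activities in the Nash proofs. Consequently the ``no surplus left behind / only legal configurations survive'' bookkeeping of the Nash soundness arguments can be replayed almost verbatim, with ``no NS-deviation to an unused activity'' replaced by ``no connected coalition strongly blocks with an unused activity.'' The completeness direction (a ``yes''-instance yields an explicit core-stable assignment) is the easy, constructive half, identical in spirit to the Nash constructions.

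I expect the main obstacle to be two-fold. First, one must verify that each conditional core-breaker gadget really has \emph{no} strongly-blocking coalition in the good case---and here, unlike for Nash, one has to rule out blocking coalitions of \emph{every} size rather than deviations of single players, so the case analysis of Example~\ref{ex:core:empty} has to be carried out for the parameterized gadget and checked to survive the cross-gadget edges introduced when the pieces are concatenated into one path. Second, the star reduction is the most delicate: a connected coalition in a star may consist of the center together with arbitrarily many leaves, so I must argue that no large such coalition can strongly block the intended assignment around activity $a$. This is exactly the point where the single-player stalker intuition breaks down and genuinely coalitional reasoning is required.
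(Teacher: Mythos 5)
Your proposal is correct and follows essentially the same route as the paper: it keeps the three reductions (from {\sc Path Rainbow Matching}, MMM, and (3,B2)-{\sc Sat}) intact and replaces each stalker sub-gadget with a constant-size gadget modeled on Example~\ref{ex:core:empty} that is internally core stable if and only if the critical activity is drained away, which is precisely the paper's construction (including the improvement to component size $3$). The paper's NP-membership argument is also the same connectivity-based check you describe, so the only work you leave open---writing down the explicit cyclic preferences and verifying the blocking-coalition case analysis---is exactly the routine part the paper fills in.
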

\begin{proof}
To verify that a given feasible assignment is core stable, it suffices to check that for every alternative $(a,k)$ there is no connected coalition with at least $k$ players who strictly prefer $(a,k)$ to the alternative of their current coalition. For the networks we consider this can be done in polynomial time, and hence our problem is in NP. The hardness reductions are similar to the respective reductions for Nash stability; essentially, we have to replace copies of the stalker game with copies of the game with an empty core. 

\vspace{0.3cm}
\noindent{\bf Paths}
We prove the hardness via a reduction from {\sc Path Rainbow Matching}. Given such an instance $(G, \phi, k)$ where $|\calC|=q$, we first construct the graph $(N_{G},L_{G})$ as defined in the proof of Theorem \ref{thm:hardness:path:NS}. To the right of the graph $(N_{G},L_{G})$, we attach a path that consists of garbage collectors $\{g_{1},g_{2},\ldots,g_{q-k}\}$ and $q$ copies $(N_{c},L_{c})$ of empty-core instances of Example \ref{ex:core:empty} where $N_{c}=\{c_1,c_2,c_3\}$ and
$L_{c}=\{ \{c_1,c_2\},\{c_2,c_3\}\}$ for each $c \in \calC$. 
For each color $c \in \calC$, we introduce a color activity $c$, and introduce additional activity $a_{c}$, which will be used only among players in $N_{c}$; specifically, we set $A=\bigcup_{c \in \calC}\{c, a_{c}\} \cup \{a_{\emptyset}\}$.

Each vertex player $v \in V(G)$, edge player $e \in E(G)$ and garbage collector $g_{i}$ has the same approval preference as in the proof of Theorem \ref{thm:hardness:path:NS}. The preference for players in $N_{c}$ $(c \in \calC)$ is cyclic and given by 
\begin{align*}
&c_1:~(c,2)\succ (a_{c},3) \succ  (a_{\emptyset},1),\\
&c_2:~(a_{c},2) \succ (c,2) \succ (a_{c},3) \succ (a_{\emptyset},1),~\mbox{and}\\
&c_3:~(a_{c},3)\succ (c,1) \succ (a_{c},2) \succ (a_{\emptyset},1).
\end{align*}

We will now argue that $G$ contains a rainbow matching of size at least $k$ if and only if there exists a core stable feasible assignment. 

Suppose that there exists a rainbow matching $M$ of size $k$. We construct a feasible assignment 
$\pi$ where for each $e=\{u, v\}\in M$ we set $\pi(e)=\pi(u)=\pi(v)=\phi(e)$,
each garbage collector $g_{i}$, $i\in[q-k]$, is arbitrarily assigned to one of the remaining $q-k$ color activities, each triple of $c_1$, $c_2$, and $c_3$ $(c \in \calC)$ is assigned to $a_{c}$, 
and the remaining players are assigned to the void activity. 
The assignment $\pi$ is core stable, since every garbage 
collector as well as every edge or vertex player assigned to a color activity
are allocated their top alternative, and no connected subsets of the remaining players together with color activity can strongly block $\pi$.

Conversely, suppose that there is a core stable feasible assignment $\pi:N \rightarrow A$. Let $M=\{\, e \in E(G) \mid \pi(e) \in \calC \,\}$. We will show that $M$ is a rainbow matching of size at least $k$. To see this, notice that at $\pi$, all the color activities should be played outside $N_{c}$'s, since otherwise no core stable assignment would exist as we have seen in Example \ref{ex:core:empty}. Further, at most $q-k$ colour activities are played among the garbage collectors, which means that at least $k$ colour activities should be assigned to vertex and edge players. Again, the only individual rational way to do this is to select triples of the form $(u,e,v)$ where $e=\{u,v\} \in E(G)$ and assign to them their colour activity $\phi(e)$; thus, $M$ is a rainbow matching of size at least $k$.

\vspace{0.3cm}
\noindent{\bf Stars}
We reduce from a restricted variant of MMM where the graph is a bipartite graph. 

Given a bipartite graph $(U,V,E)$ and an integer $k$, we create a star with center $c$ and the $|V|+2$ leaves: one leaf for each vertex $v \in V$ plus two other players $s_{1}$ and $s_{2}$. We then introduce an activity $u$ for each $u \in U$, and three other simple activities $a$, $x$, and $y$. 

A player $v \in V$ approves $(u,1)$ for each $u \in U$ such that $\{u,v\}\in E$ as well as $(a,|V|-k+1)$ and prefers the former to the latter.  That is, $(u,1) \succ_{v}(a,|V|-k+1)$ for any $u \in U$ with $\{u,v\}\in E$; $v$ is indifferent among activities associated with its neighbors in the graph, that is, $(u,1)\sim_{v} (u^{\prime},1)$ for all $u,u^{\prime} \in U$ such that $\{u,v\},\{u^{\prime},v\}\in E$. The center player $c$ strictly prefers $(a,|V|-k+1)$ to any other alternative, and has the same cyclic preferences over the alternatives of $x$ and $y$ as in Example \ref{ex:core:empty} together with players $s_{1}$ and $s_{2}$, given by
\begin{align*}
&s_{1}:~(y,2) \succ (x,3) \succ (a_{\emptyset},1)\\
&c: (a,|V|-k+1) \succ (x,2) \succ (y,2) \succ (x,3) \succ (a_{\emptyset},1)\\
&s_{2}:~(x,3) \succ (y,1) \succ (x,2) \succ (a_{\emptyset},1).
\end{align*}
Here, $s_{1}$'s (respectively, the center $c$ and the player $s_{2}$) preference corresponds to the one for player $1$ (respectively, player $2$ and player $3$) in Example \ref{ex:core:empty}. We will show that $G$ contains a maximal matching of size at most $k$ if and only if there exists a core stable feasible assignment in a similar manner to the previous proof.

Suppose that $G$ admits a maximal matching $M$ with at most $k$ edges. We construct a feasible assignment $\pi$ by setting $\pi(v)=u$ for each $\{u,v\} \in M$, and assigning $|V|-k$ non-matched vertex players and the center to $a$, assigning $s_{2}$ to $y$, and assigning the remaining players to the void activity. The center $c$ is allocated to her top alternative, and hence no connected subset of the three players $s_{1},c$ and $s_{2}$ together with activity $x,y$, and $z$ strongly blocks $\pi$. Further, no vertex player $v$ together with an unused vertex activity $u$ strongly blocks $\pi$, since if such a pair $\{u,v\}$ existed, this would mean that $\{u,v\}$ is not included in $M$, and hence $M\cup \{u,v\}$ forms a matching, which contradicts the maximality of $M$. Hence, $\pi$ is core stable.

Conversely, suppose that there exists a core stable feasible assignment $\pi$ and let $M=\{\, \{\pi(v),v\} \mid v \in V \land \pi(v) \in U \,\}$. We will show that $M$ is a maximal matching of size at most $k$. By core stability, the center player and $|V|-k$ vertex players are assigned to the activity $a$; otherwise, no core stable outcome would exist as we have seen in Example $2$; thus, $|M| \leq k$. Notice further that $M$ is a matching since each vertex player plays at most one activity, and by individual rationality each vertex activity should be assigned to at most one player. Now suppose towards a contradiction that $M$ is not maximal, i.e., there exists an edge $\{u,v\} \in E$ such that $u \in U$, $v \in V$, and $M\cup \{u,v\}$ is a matching. This would mean that $\pi$ assigns no player to $u$ and no vertex activity to $v$, and hence the coalition $\{v\}$ together with the vertex activity $u$ strongly blocks $\pi$, contradicting the stability of $\pi$.

\vspace{0.3cm}
\noindent{\bf Small Components}
We reduce from (3,B2)-{\sc Sat}. Consider a formula $\phi$ with variable set $X$ and clause set $C$, where for each variable $x \in X$ we write $x_1$ and $x_2$ for the two positive occurrences of $x$, and ${\bar x_1}$ and ${\bar x_2}$ for the two negative occurrences of $x$. Corresponding to the variable occurrences, we introduce four players $x_1$, $x_2$, ${\bar x_1}$, and ${\bar x_2}$ for each variable $x \in X$. We also introduce two other players $x$ and ${\bar x}$ for each variable $x \in X$. For each clause $c \in C$, we introduce three players $c_1,c_2$, and $c_3$. 
The network consists of one component for each clause $c \in C$: a star with center $c_2$ and leaves $c_1$ and $c_3$, and of two components for each variable $x \in X$: a star with center $x$ and leaves $x_1$ and $x_2$, and a star with center ${\bar x}$ and leaves ${\bar x_1}$ and ${\bar x_2}$. Hence, the size of each connected component of this graph is at most $3$. 
We then construct the set of activities given by
\[
A^*=\bigcup_{x \in X} \{x,x_1,x_2,{\bar x_1},{\bar x_2},a_{x},b_{x},{\bar a_{x}},{\bar b_{x}}\}.
\]
For each $x \in X$, the preferences of the positive literal players $x_1$ and $x_2$ and the positive variable player $x$ are given as follows: 
\begin{align*}
&x_1:~(x,3) \sim (x_{1},1)\succ (b_x,2) \succ (a_{x},3)\succ (a_{\emptyset},1)\\
&x_2:~(x,3) \sim (x_{2},1) 
\succ (a_{x},2)\succ (b_{x},2) \succ (a_{x},3) \succ (a_{\emptyset},1)
\\
&x:~(x,3) \succ (a_{x},3)\succ (b_{x},1) \succ (a_{x},2) \succ (a_{\emptyset},1)
\end{align*} 
Similarly, for each $x \in X$, the preferences of the negative literal players $\bar x_1$ and $\bar x_2$, and the negative variable player $\bar x$ are given as follows:
\begin{align*}
&{\bar x_1}:~(x,3) \sim ({\bar x_1},1)\succ ({\bar b_{x}},2) \succ ({\bar a_{x}},2) \succ (a_{\emptyset},1)\\
&{\bar x_2}:~(x,3) \sim ({\bar x_2},1) 
\succ ({\bar a_{x}},2)\succ ({\bar b_{x}},2) \succ ({\bar a_{x}},3) \succ (a_{\emptyset},1)
\\
&{\bar x}:~(x,3) \succ ({\bar a_{x}},3)\succ ({\bar b_{x}},1) \succ ({\bar a_{x}},2) \succ (a_{\emptyset},1)
\end{align*}
Notice that the preferences of each triple contains a cyclic relation, and hence in a core stable assignment, there are only two possible case: first, all the three players $x_1$, $x_2$, and $x$ are assigned to $x$, and players ${\bar x_1}$, ${\bar x_2}$, and ${\bar x}$ are assigned to activities ${\bar x_1}$ , ${\bar x_2}$, and ${\bar b_{x}}$, respectively; second, all the players ${\bar x_1}$, ${\bar x_2}$, and ${\bar x}$ are assigned to $x$, and players $x_1$, $x_2$, and $x$ are assigned to activities $x_1$, $x_2$, and $b_{x}$, respectively.

For each clause $c \in C$ where $c=\ell^c_1 \lor \ell^c_2 \lor \ell^c_3$, the preferences for clause players $c_1$, $c_2$, and $c_3$ are again cyclic and given as follows: 
\begin{align*}
&c_1:~(\ell^c_1,2)\succ (a_{\emptyset},1),\\
&c_2:~(\ell^c_2,2)\succ (\ell^c_1,2) \succ (\ell^c_3,2) \succ (a_{\emptyset},1),\\
&c_3:~(\ell^c_3,2)\succ (\ell^c_1,1) \succ (\ell^c_2,2) \succ (a_{\emptyset},1).
\end{align*}
If there exists a core stable outcome, it must be the case that at least one of the literal activities $\ell^c_1$, $\ell^c_2$, and $\ell^c_3$ must be used outside of the three players $c_1$, $c_2$, and $c_3$; otherwise, no feasible assignment would be core stable.

Now we will show that $\phi$ is satisfied by some truth assignment if and only if there is a core stable feasible assignment. 

Suppose that there exists a truth assignment that satisfies $\phi$. We construct a core stable feasible assignment $\pi$ as follows. First, for each variable $x$ that is set to True, we assign positive literal activities $x_1$, $x_2$, and $b_{x}$ to positive literal players $x_1$, $x_2$, and a variable player $x$, respectively, and assign a variable activity $x$ to players ${\bar x_1}$, ${\bar x_2}$, and ${\bar x}$. For each variable $x$ that is set to False, we assign negative literal activities ${\bar x_1}$, ${\bar x_2}$, and ${\bar b_{x}}$ to negative literal players ${\bar x_1}$, ${\bar x_2}$, and a variable player ${\bar x}$, respectively, and assign a variable activity $x$ to players $x_1$, $x_2$, and $x$. Note that this procedure uses at least one of the literal activities $\ell^c_1$, $\ell^c_2$ and $\ell^c_3$ of each clause $c$, since the given truth assignment satisfies $\phi$. Then, for each clause $c \in C$, we assign activities by constructing a digraph with vertices being potential assignments and identifying an ``undominated'' activity. Let 
\[
V_{c}=\{(\ell^c_1,\{c_1,c_2\}),(\ell^c_2,\{c_2,c_3\}),(\ell^c_1,\{c_3\}),(\ell^c_3,\{c_2,c_3\})\},
\]
and delete from $V_{c}$ vertices whose activities $\ell^{c}_{j}$ are already assigned to some players. Then, orient from $(a,S)$ to $(b,T)$ if there exists a common player $c_j \in S\cap T$ who strictly prefers $(a,|S|)$ to $(b,|T|)$. This digraph is acyclic since at least one of the literal activities $\ell^c_1$, $\ell^c_2$, and $\ell^c_3$ has been already assigned to some literal players. If the digraph is empty, i.e., all the activities are already assigned in a previous step, we assign the void activity to all the clause players $c_1$, $c_2$, and $c_3$. Otherwise, we assign the activity of a source vertex of the digraph to its coalition and the void activity to the rest. The resulting assignment $\pi$ of players to activities is core stable, because no variable and literal player wishes to change their alternative and no connected subset of each $N_{c}$ cannot strongly block $\pi$.

Conversely, suppose that there exists a core stable feasible assignment $\pi$. By core stability, for each variable $x \in X$, either a pair of positive literal activities $x_1$ and $x_2$ or a pair of negative literal activities ${\bar x_1}$ and ${\bar x_2}$ should be assigned to the corresponding pair of literal players; further, for each clause $c$, at least one of the literal activities $\ell^c_1$, $\ell^c_2$, and $\ell^c_3$ should be played outside of the clause players $c_1$, $c_2$, and $c_3$. Then, take the truth assignment that sets the variables $x$ to True if their positive literal players $x_1$ and $x_2$ are assigned to positive literal activities $x_1$ and $x_2$; otherwise, $x$ is set to False; this can be easily seen to satisfy $\phi$.
\end{proof}

\noindent
Our FPT result for graphs with small connected components
can also be adapted to the core. 
In contrast, our approach for Nash stability for paths and stars does not seem to generalize to core stability, and we leave these cases for future work.
                                                     
\begin{theorem}\label{thm:FPT:smallcomponents:core}
There exists an algorithm that given an instance of \gGASP\ 
checks whether it has a core stable feasible assignment, finds one if it exists, and runs in time 
$O(p^{c+1}8^{p}kn^2)$, where $c$ is the maximum size of the connected components and 
$k$ is the number of connected components.
\end{theorem}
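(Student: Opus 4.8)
The plan is to adapt almost verbatim the dynamic program of Theorem~\ref{thm:FPT:smallcomponents:NS}, changing only the stability test that is applied inside each component. The crucial structural observation is that on a graph whose connected components all have size at most $c$, every connected coalition lies entirely inside a single component. Hence strong blocking is a purely local phenomenon: if a connected coalition $S \subseteq N_j$ together with an activity $a$ strongly blocks an assignment $\pi$, then the requirement $\pi^{a} \subseteq S$ forces $a$ to be either globally unused or used only by players within $N_j$ itself, since any activity currently assigned to a player outside $N_j$ has $\pi^{a} \not\subseteq S$ and can never take part in a block by $S$.

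Concretely, I would order the components $N_{1},\ldots,N_{k}$ and, for each pair $B^{\prime} \subseteq B \subseteq A^{*}$, let $g_{i}(B,B^{\prime})$ be \emph{true} if and only if there is an individually rational feasible assignment $\pi:\bigcup_{j\le i}N_{j} \to A$ that uses exactly the activities in $B^{\prime}$ and admits no connected coalition $S \subseteq \bigcup_{j\le i}N_{j}$ that, together with some activity $a \in B^{\prime}\cup(A^{*}\setminus B)$, strongly blocks $\pi$. The base case and the transition mirror the Nash program of Theorem~\ref{thm:FPT:smallcomponents:NS}: $g_{i}(B,B^{\prime})$ is \emph{true} iff $B^{\prime}$ splits as a disjoint union $B^{\prime}=P\cup Q$ with $g_{i-1}(B,P)$ \emph{true} and there is an individually rational assignment of $N_{i}$ onto $Q\cup\{a_{\emptyset}\}$ using all of $Q$ such that no connected coalition within $N_{i}$ strongly blocks it using an activity of $Q\cup(A^{*}\setminus B)$. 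A core stable outcome exists iff $g_{k}(B,B)$ is \emph{true} for some $B\subseteq A^{*}$, and a witnessing assignment is recovered by standard backtracking.

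The step requiring the most care, and the one I expect to be the main obstacle, is justifying that for coalitions inside the current component it suffices to test blocking against activities in $Q\cup(A^{*}\setminus B)$ only. The point is that activities used by earlier components (those in $P$) and activities reserved for later components (those in $B\setminus B^{\prime}$) are, in the final outcome, assigned to players outside $N_{i}$; consequently any such activity $a$ has $\pi^{a}\not\subseteq S$ for every coalition $S\subseteq N_{i}$, so it cannot participate in a block by $S$ and may safely be ignored during the processing of $N_{i}$. Conversely, since every connected coalition sits inside one component, testing each component against its own used activities together with the globally unused set $A^{*}\setminus B$ catches all strong blocks of the assembled assignment. Combined with the componentwise preservation of individual rationality, this locality argument shows that $g_{k}(B,B)$ captures exactly the core stable assignments that use activity set $B$; this composition argument, rather than any calculation, is the delicate part.

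For the running time, enumerating the pairs $B^{\prime}\subseteq B\subseteq A^{*}$ and, in the transition, the bipartitions $B^{\prime}=P\cup Q$ contributes the $8^{p}$ factor exactly as in the Nash case. For a fixed component we try all at most $(p+1)^{c}=O(p^{c})$ maps $N_{i}\to Q\cup\{a_{\emptyset}\}$. The only new cost relative to the Nash program is the blocking test: for each candidate assignment of a component we range over its $O(2^{c})=O(1)$ connected coalitions $S$ and, for each, over all $p$ activities $a$, checking $\pi^{a}\subseteq S$ and whether every member of $S$ strictly prefers $(a,|S|)$ to its current alternative. This replaces the per-player NS-deviation test by a per-coalition, per-activity scan and thus costs an extra factor of $p$, yielding the claimed bound $O(p^{c+1}8^{p}kn^{2})$.
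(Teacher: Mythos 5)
Your proposal is essentially identical to the paper's own proof: the same dynamic program over components with states $f_i(B,B')$, the same bipartition transition $B'=P\cup Q$, the same blocking test restricted to $Q\cup(A^*\setminus B)$, and the same final condition $f_k(B,B)$. In fact you make explicit two things the paper leaves implicit---the locality argument that connected coalitions cannot block with activities used in other components, and the source of the extra factor of $p$ in the running time---so no gaps remain.
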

\begin{proof}
We give a dynamic programming. Suppose our graph $(N,L)$ has $k$ connected components $(N_{1},L_{1}), (N_{2},L_{2}), \ldots, (N_{k},L_{k})$. For each $i=1,2,\ldots,k$, each set $B \subseteq A^{*}$ of activities assigned to $N$, and each set $B^{\prime} \subseteq B$ of activities assigned to $\bigcup^{i}_{j=1}N_{j}$, we denote by $f_i(B,B^{\prime})$ whether there is an assignment of $\bigcup^{i}_{j=1}N_{j}$ that gives rise to a core stable outcome. 
Specifically, $f_{i}(B,B^{\prime})$ is {\em true} if and only if there exists an individually rational feasible assignment $\pi:\bigcup^{i}_{j=1}N_{j} \rightarrow A$ such that 
\begin{itemize}
\item $\pi$ only uses the activities in $B^{\prime}$, i.e., $\pi^{b}\neq \emptyset$ for all $b \in B^{\prime}$ and $\pi^{b}=\emptyset$ for all $b \in A^*\setminus B^{\prime}$, and 
\item no connected subset $S \subseteq \bigcup^{i}_{j=1}N_{j}$ together with activity in $B^{\prime}\cup(A^{*} \setminus B)$ strongly blocks $\pi$. 
\end{itemize}

For $i=1$, each $B \subseteq A^{*}$, and $B^{\prime}\subseteq B$, we compute the value of $f_1(B,B^{\prime})$ by trying all possible mappings $\pi:N \rightarrow B^{\prime} \cup \{a_{\emptyset}\}$, and check whether it is an individually rational feasible assignment using all activities in $B^{\prime}$ and such that no connected subset $S \subseteq N_{1}$ together with activity in $B^{\prime}\cup(A^{*} \setminus B)$ strongly blocks $\pi$. For $i\geq 2$ from $i=2,3,\ldots,k$, each $B \subseteq A^{*}$, and $B^{\prime} \subseteq B$, we set $f_{i}(B,B^{\prime})$ to true if there exists a bipartition of $B^{\prime}$ into $P$ and $Q$ such that $f_{i-1}(B,P)$ is true and there exists an individually rational feasible assignment $\pi:N_{i} \rightarrow Q\cup \{a_{\emptyset}\}$ such that each activity in $Q$ is assigned to some player in $N_i$, and no connected subset $S \subseteq N_{i}$ together with activity in $Q\cup(A^{*} \setminus B)$ strongly blocks $\pi$. It is not difficult to see that a core stable solution exists if and only if $f_{k}(B,B)$ for some $B \subseteq A^{*}$. If this is the case, such a stable feasible assignment can be found using standard dynamic programming techniques. The bound on the running time is immediate.
\end{proof}


\section{Conclusion}
In this paper, we have initiated the study of group activity selection problems with network structure, and found that 
even for very simple families of graphs computing stable outcomes is NP-hard. We identified several ways 
to circumvent this computational intractability. For \gGASP s with copyable activities, we showed that there 
exists a polynomial time algorithm to compute stable outcomes, and for \gGASP s with few activities, we 
provided fixed parameter algorithms for restricted classes of networks.

We leave several interesting questions for future work. Our fixed-parameter tractability results can be extended to 
more general graph families, such as graphs with bounded pathwidth and graphs with a bounded number of internal nodes. 
However, for general graphs, the exact parameterized complexity of determining the existence of stable outcomes is 
unknown. When the underlying graph is complete, one can adapt techniques of \citet{Darmann2012} to show that the 
problem of computing Nash stable outcomes is in XP with respect to $p$; for other networks, including trees, it is not even clear 
whether our problem is in XP with respect to $p$. It would be also interesting to investigate the parameterized 
complexity of \gGASP s using other parameters.

Another promising research direction is to study analogues of other solution concepts from the hedonic games literature for 
\gGASP s; in particular, it would be interesting to understand the complexity 
of computing individually stable outcomes in \gGASP s.

\bibliographystyle{aaai}

\end{document}